\newtheorem{lthm}{Theorem}
\newtheorem{lem}[lthm]{Lemma}
\newtheorem{proposition}{Proposition}
\def\BState{\State\hskip-\ALG@thistlm}
\begin{document}

\title{\LARGE{Percentile Policies for Tracking of Markovian Random Processes with Asymmetric Cost and Observation}}
\author{Parisa Mansourifard$^*$,  ~\IEEEmembership{Student Member}~IEEE, Tara Javidi$^\dagger$, ~\IEEEmembership{Senior Member}~IEEE, Bhaskar Krishnamachari$^*$, ~\IEEEmembership{Senior Member}~IEEE
\thanks{$^*$Parisa Mansourifard and Bhaskar Krishnamachari  are with Ming Hsieh Department of Electrical Engineering, University of Southern California,
Los Angeles 90089 CA USA (email: { parisama@usc.edu, bkrishna@usc.edu}).  }
\thanks{$^\dagger$Tara Javidi is with the Department of Electrical and Computer Engineering, University of California, San Diego, La Jolla, CA 92093 USA.(email: {tjavidi@ucsd.edu})}
        }

\maketitle

\begin{abstract}
Motivated by wide-ranging applications such as video delivery over networks using Multiple Description Codes (MDP), congestion control, rate adaptation, spectrum sharing, provisioning of renewable energy, inventory management and retail, we study the state-tracking of a Markovian random process with a known transition matrix and a finite ordered state set. The decision-maker must select a state as an action at each time step in order to minimize the total expected (discounted) cost. The decision-maker is faced with asymmetries both in cost and observation:
in case the selected state is less than the actual state of the Markovian process, an under-utilization cost occurs and only partial observation about the actual state (\textit{i.e.} an implicit lower bound characteristic on the actual state) is revealed; otherwise, the decision incurs an over-utilization cost and reveals full information about the actual state.
We can formulate this problem as a Partially Observable Markov Decision Process (POMDP) which can be expressed as a dynamic program (DP) based on the last full observed state and the time of full observation. This formulation determines the sequence of actions to be taken
between any two consecutive full observations of the actual state, in order to minimize the total expected (discounted) cost.
However, this DP grows exponentially in the number of states, with little hope for a computationally feasible solution.
We present an interesting class of computationally tractable policies with a percentile threshold structure. A generalization of binary search, this class of policies attempt at any given time to reduce the uncertainty by a given percentage.
Among all percentile policies, we search for the one with the minimum expected cost.
 The result of this search is a heuristic policy which we evaluate through numerical simulations. 
 We show that it outperforms the myopic policies and under some conditions performs close to the optimal policies. 
  Furthermore, we derive a lower bound on the cost of the optimal policy which can be computed with low complexity and give a measure for how close our heuristic policy is to the optimal policy.
\end{abstract}


\section{\textbf{Introduction}} \label{sec:intro}

In networks with uncertain demands (or resources) the available resources (or demands) must be allocated carefully, to address a trade-off between inefficiencies arising from over-utilization and under-utilization, and to gather information useful for future decisions in case of time-dependent variations.
For instance, in communication networks, one of the important goals is satisfying the application traffic demands using available resources such as bandwidth, energy, storage, etc. There could be uncertainty in either demand or resource. 
Under demand uncertainty allocating more resources than what is demanded can result in an over-utilization cost while the under-utilization immediately incurs a cost. 
Similarly, under resource uncertainty, excess demand results in congestion (over-utilization) while shortage of demand leaves valuable resources under-utilized.

One prominent example is video delivery over an unreliable network where Multiple Description Coding (MDC) is used, specially for real-time applications in which retransmission of lost packets are not practical \cite{goyal2001multiple,wang2002error,wang2005multiple,pereira2003multiple,akyol2007flexible}.
MDC combats the uncertainty over network resources via encoding source information using multiple independently decodable complementary bitstreams called descriptions. 
If all of the descriptions are received at the receiver, they can be decoded combined, with the highest level of quality and when a non-empty subset of them is received, the information could still be decoded providing an acceptable level of quality.
This technique allows for a graceful degradation in the quality of the image.
To achieve this, MDC introduces some redundancy in each description which will be costly (in terms of transmission resources) whenever all of the descriptions are received \cite{goyal2001multiple}.
In this example, the uncertain available resource is the number of descriptions that network can handle to deliver to the receiver and the decision-maker must decide on the actual number of descriptions to code the video stream. 
A fundamental design problem in MDC is for the source to choose the number of descriptions to code the video stream for a given network period. If the packet loss process of the network can be modeled as a Markovian random process, the actual number of descriptions that can go through the network successfully, which we refer to as the state of the network, forms a time-homogeneous finite-state Markov Chain. The full observation about the state of the network is revealed only if the number of descriptions were higher than what the network can handle. 
The aim of this work is to estimate how many descriptions could go through the network or equivalently to track the state of the network and select the number of descriptions as close as possible to the state in order to maximize/minimize the total expected reward/cost.


The tracking of Markovian random process considered in this paper can be used to solve the above problem of selecting the optimal number of MDC descriptions. Other applications to which the tracking problem considered can be applied, as we discuss in related works, are congestion control in the networks and inventory control problems with perishable inventories. In general, the problem of tracking a Markov chain subject to asymmetric cost and information is known to be a Partially Observable Markov Decision Problem (POMDP). As such, the solution in principle can be characterized via an appropriately constructed dynamic program. However, such characterization, due to its computational complexity, provides little practical insight. More precisely, even in case of finite underlying state and finite actions, a POMDP in general is known to be P-SPACE hard \cite{papadimitriou1987complexity}.

To summarize, our main contributions to the tracking of Markovian random processes are as follows:
\begin{itemize}
\item Representing the class of optimal policies in terms of a sequence of actions in between two consecutive full observation of the state. 
\item Introducing a new class of percentile policies with percentile threshold structure. This class of policies can be used to upper bound the expected cost of the optimal policy.
\item Providing a lower bound on the expected cost of the optimal policy. Similar to the upper bound, the lower bound can be constructed recursively.  
\item Presenting the best computable percentile policy, called Finite Resolution Percentile (FRP), that in simulations outperforms the myopic policy. Furthermore, we utilize the numerical simulations of the optimal policy for sufficiently small horizons to investigate the performance of the proposed heuristic policies. In particular, consider a sufficiently small horizon, we show numerically that this policy performs close to the optimal policy. 
\item Numerically assessing the optimality gap associated with the best computable percentile policy using the upper and lower bounds on the expected cost of the optimal policy. 
\end{itemize}

The remainder of the paper is organized as follows: The related works are presented in Section \ref{sec:work}. The problem formulation is given in Section \ref{sec:problem}. We present the optimal policy via a sequence-based dynamic programming and its complexity in Section \ref{sec:seqDP}.
We introduce the percentile policy and present two heuristic percentile policies in Section \ref{sec:percentile}.
An upper bound and a lower bound on the expected cost of the optimal policy are given in Section \ref{sec:perf}.
Section \ref{sec:belief} shows the extension of the policies to the case where the initial belief about the actual state is given instead of the full observation.
The numerical results are presented in Section \ref{sec:simulation}. Section \ref{sec:conclusion} provides conclusions and an outlook for future works and finally the equivalent belief-based formulation and all of the proofs are given in Appendices.

\section{\textbf{Related Works}}\label{sec:work}

Here, we present three different sets of literature related to our work. The first set of literature corresponds to MDC, \textit{e.g.} \cite{goyal2001multiple,wang2002error,wang2005multiple,pereira2003multiple,akyol2007flexible} and the references therein.
Some of the works in the literature, considers only two-description coding 
\cite{goyal2001multiple,wang2002error,wang2005multiple}, but in some cases such as Peer-to-Peer video streaming more than two-descriptions are reasonable, see \textit{e.g.} \cite{akyol2007flexible}.
We refer the readers to \cite{kazemi2014review} as a review of deffierent MDC techniques since in this work, we do not focus on the MDC techniques. We focus on the modeling of the cost functions and how to decide about the number of descriptions based on partial or full observations about the state of the network, \textit{i.e.} the number of available paths with no loss, which is a function of the packet loss process of the network and evolves as a Markovian process.

The second set of literature related to our work 
is congestion/rate control in which a transmitter must select the transmission rate at the transport or link layer to utilize the available bandwidth, which varies randomly due to the dynamic nature of traffic load imposed by other users on the network \cite{mansourifard2013bayesian,mansourifard2015, johnston2006opportunistic,laourine2010betting}. This scenario is of particular interest when designing wireless networks where resources are subject to random variations and uncertainty. 
The structure of the optimal policies has been established for the simpler special cases of optimizing transmissions over a Gilbert-Elliott (two-state) channel in \cite{johnston2006opportunistic,laourine2010betting, wu2012online}. 
In this paper, we study a generalization of these works to more than two states.

Beside the networking application, this work could have a broad range of applications, \textit{e.g.}, incentive spectrum-sharing \cite{berry2013nature,leng2014microeconomic}, the provisioning of renewable energy resources \cite{nair2012energy}, retailer-consumer interaction \cite{huang2015incentive}, etc. For instance, in the spectrum-sharing application \cite{leng2014microeconomic} in cellular wireless networks, the goal is to allocate/share the resource (Base-station capacities) between two cellular providers in a single-cell in order to satisfy uncertain bandwidth demands inside the cell. 
As another example, in the electricity markets with renewable energy resources  \cite{nair2012energy}, the uncertainty is on both the demand and the renewable energy source side and the goal is to find the optimal procurement strategy.
In the retailer-consumer interaction, the retailer needs to track consumers' privacy sensitivity to offer personalized advertisements (coupons) \cite{huang2015incentive}. The authors in \cite{huang2015incentive} assume that the privacy state of the consumer evolves as a Markovian process and the retailer must choose as an action the advertisement privacy levels in order to minimize the total expected cost. They consider the two-state two-action and multi-state two-action cases and model the problem as a POMDP. Our work provides a direction to generalize their work to the multi-state multi-action case.

 Last but not least, the third set of problems related to our work is the perishable inventory control problem in operations research management literature as it applies to the problem of optimal inventory control to meet uncertain demands for a perishable product \cite{bensoussan2007multiperiod,qin2011newsvendor,negoescu2008optimal, besbes2010implications,bensoussan2005optimal,bensoussan2008inventory, lu2005inventory, chen2010bounds}. 
In these problems, the demand for some good is assumed to follow a stochastic process and at the beginning of each decision epoch the decision-maker decides on the inventory level (\textit{i.e.} how many items to store) in order to satisfy the demand. Mapping the Markov demand to a hidden state and the inventory level to the selected action, the inventory control problem with perishable good is equivalent to that of tracking with asymmetric cost and information. Most of the works in the inventory control literature, \textit{e.g.} \cite{ding2002censored,bensoussan2009technical}, assume that the demand process is independent and identical distributed (i.i.d) at different time steps, with the exception of \cite{bensoussan2007multiperiod}. With this simplifying assumption, the optimal policy is easily shown to coincide with a myopic policy which minimizes the immediate expected cost. 
The most closely related work to ours is \cite{bensoussan2007multiperiod} in which an inventory management problem with memory (Markovian) demand process is considered. 
In their work, some structural properties of the optimal actions relative to the myopically optimal actions are obtained. 
In our work, in contrast, we focus on the design and analysis of a class of heuristic policies. In particular, we introduce the class of \textbf{percentile policies} and 
evaluate their performance. In addition, we present a lower bound on the cost of the optimal policy which can be computed with low complexity and give a measure for how close our heuristic policies are to the optimal policy.

\section{\textbf{Problem Formulation}}\label{sec:problem}
We consider a discrete-time finite-state Markovian process whose state, denoted by $B_t$ at time step $t$, selected from the finite state set $\mathcal{M}=\{0,1,,...,M\}$, and evolves based on a known transition matrix, over a finite horizon, $T$. 
The transition probabilities of the actual states $B_t$ over time are assumed to be known and stationary and indicated by an $(M+1)\times (M+1)$ transition probability matrix, $P$. The elements of the matrix are $P_{i,j}=Pr(B_{t+1}=j|B_t=i), i,j\in \mathcal{M}, \forall t$ which indicates the probability of moving from the state $i$ at a time step to the state $j$ at the next time step.

The objective is to select the sequential actions (policy) among the states in $\mathcal{M}$ such that the total expected (discounted) cost accumulated over the finite horizon $T$ is minimized.
At each time step $t$, the decision-maker selects a state as an action, denoted by $r_t$, based on the history of observations and pays a cost which is a function of the selected state and the actual state $B_t$.
In particular, selecting an action higher than the actual state incurs an over-utilization cost; while in contrast, selecting an action less than the actual state causes a distinctly under-utilization cost.
The immediate cost paid at time step $t$ is a piece-wise linear function of the difference between the selected state and the actual state, given by:
\begin{align}\label{eq:rwd}
C(B_t;r_t)=\begin{cases}
 c_u(r_t-B_t)  &\text{if}\ r_t>B_t \\
 c_l(B_t-r_t)  &\text{if}\ r_t \leq B_t, \\
\end{cases}
\end{align}
where $c_u$ and $c_l$ are the over-utilization and under-utilization cost coefficients, respectively.\footnote{For certain problems, \textit{e.g.} \cite{mansourifard2013bayesian,wu2012online}, the cost may be more naturally expected as $f(B_t)+C(B_t,r_t)$ where $f(.)$ is a function of only $B_t$ and is not under the control, or the problem may be defined as a reward maximization. However these are mathematically equivalent.} 
Note that the immediate cost is not observable when $r_t< B_t$.
Additionally, in our models the actions may result in information asymmetry about the hidden state of the system: selecting an action that is greater than the state has the side benefit of revealing the exact state of the actual (hidden) state; in contrast, selecting an action below the actual (hidden) state does not provide direct information about the state. Note that the absence of full state observation in itself results in an implicit lower bound characterization on the state. 

The goal is to select actions $r_1,...,r_T$ in order to minimize the total expected (discounted) cost over the horizon $T$, given by:
\begin{align}\label{eq:minprob1}
\mathbb{E}\{\sum_{t=1}^T {\beta ^{t-1}  C(B_t;r_t)}|s_0\},
\end{align}
where $0\leq \beta \leq 1$ denotes the discount factor and $s_0$ is the initial full observed state. 

Figure \ref{fig:samplepath} shows an example where a sample path of $B_t$ and a sequence of actions are selected based on a given policy. 
The under-utilization and over-utilization costs for each time step is given on the figure.

\begin{figure}[t]
    \centerline{\includegraphics[width=.8\linewidth]{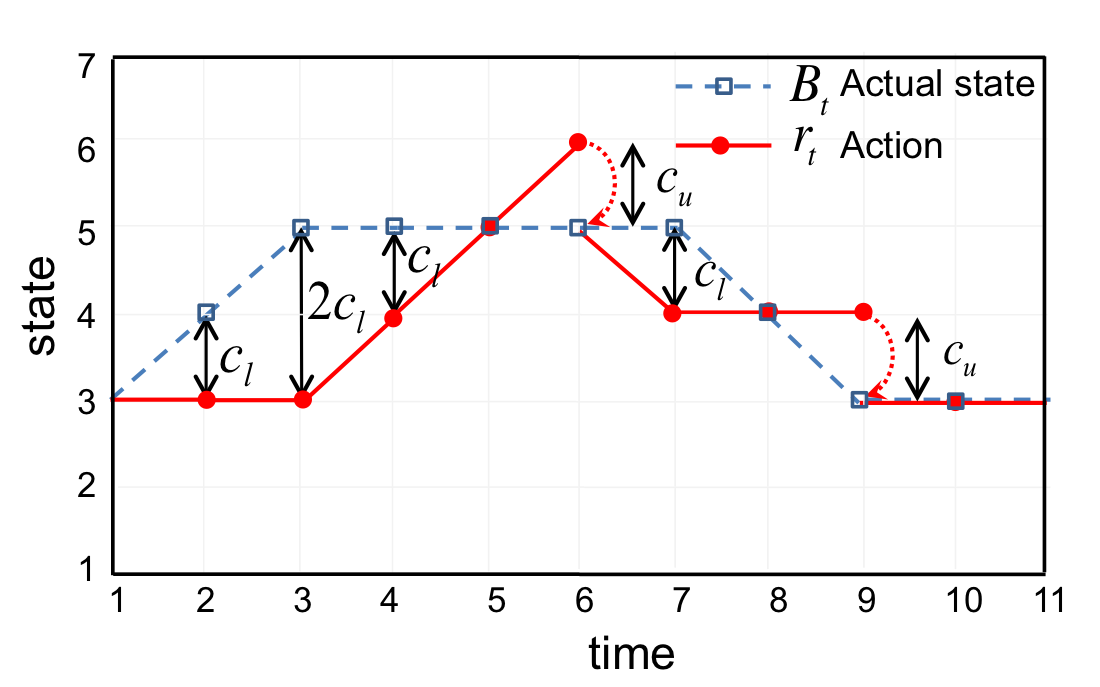}}
    \caption{An example tracking of a sample path of the actual states $B_t$, using the actions $r_t$. In this example, at time steps 6 and 9 the actions exceed the actual state and thus the full observation is revealed. At other time steps, only the partial observation is available. The under-utilization and over-utilization costs are shown on the figure.
    For instance, the costs are $C(B_3=5,r_3=3)=2c_l$, $C(B_6=5,r_6=6)=c_h$ and $C(B_{10}=3,r_{10}=3)=0$.}
    \label{fig:samplepath}
\end{figure}

Since we do not get full observation all the time, we formulate our problem within a POMDP-based framework. We present the POMDP formulation based on the pair of $(s,t)$ where $s$ and $t$ are the state and the time where that very last time the state was fully revealed (full observation). The decision-maker must select the whole sequence of actions to be taken, for the given observed state of $s$, from time $t+1$ until the next full observation.
Note that after time $t$, the whole sequence of actions with the length of $T-t$ 
must be determined.

The observation at each time step is a function of the actual state and the selected action as follows: If $r_t>B_t$ a full observation about $B_t$ is revealed and full observed state is updated to $B_t$ and the time of full observation is updated to $t$. At the next time step, the policy changes the sequence correspondingly.
Otherwise, if $r_t \leq B_t$ only partial observation about the actual state is revealed that it is higher than or equal to $r_t$ and the policy continues with the action sequence corresponding to the last full observation.
For instance, at Fig. \ref{fig:samplepath}, initially, $s=3$ and $t=1$ and the actions are taken correspondingly. After exceeding the actual state at time step 6 the full observation is updated to $s=5$ and $t=6$. Thus the corresponding actions are taken at time steps $7\leq t\leq 9$ and after time step 9, the full observation state and time are updated to $s=3$ and $t=9$ and their corresponding actions are selected at $10\leq t$. 

\section{\textbf{Optimal Policy via Dynamic Programming}}\label{sec:seqDP}

Let us consider the class $\Pi$ of policies which map the last full observed state $s$ and the time of observation $t$ to an action sequence,
\textit{i.e.}, 
\begin{align}
\Pi: \mathcal{M}\times [0,T-1] \rightarrow \mathcal{M}^{T-t}, \nonumber
\end{align}
where $\mathcal{M}^{T-t}$ is the space of possible sequences of length $T-t$. It it straight forward to show that without loss of optimality the search for an optimal policy can be restricted to $\Pi$.

Note that, alternatively, any policy $\pi \in \Pi$ can be represented by an $(M+1)\times T$ structure whose $(s,t)$ element is a sequence of length $T-t$ where elements themselves are selected from $\mathcal{M}$. 
Next we use this representation to compute the performance of the policy recursively.

Before writing the recursion, we consider $_{\underline{a}}P(s_1,s_2,\tau)$ probability of going from state $s_1 \in \mathcal{M}$ to $s_2 \in \mathcal{M}$ in $\tau$ steps never crossing below a given sequence $\underline{a}=(a_1,...,a_{\tau})$, defined in (\ref{eq:Ps1s2}). 
We also define $_{\underline{a}}\Gamma(s_1,s_2,\tau)$ as the expected discounted cost-to-go to be paid after the full observation of $s$ at time step $t$ up to the next full observation, as given by (\ref{eq:Gammas1s2})\footnote{This is a generalization of the notion of the taboo probability \cite{cox1977theory}.}.

\begin{figure*}
\begin{align}\label{eq:Ps1s2}
_{\underline{a}}P(s_1,s_2,\tau)=
\begin{cases}
 P_{s_1,s_2},   &\text{if}\ \tau=1,\\
 \sum_{j_1=a_1}^M \sum_{j_2=a_2}^M ... \sum_{j_{\tau-1}=a_{\tau}}^M P_{s_1,j_1} \times \prod_{n=2}^{\tau-1} P_{j_{n-1},j_n} \times P_{j_{\tau-1},s_2},   &\text{if}\ \tau>1. 
 \end{cases}
\end{align}
\end{figure*}

\begin{figure*}
\begin{align}\label{eq:Gammas1s2}
_{\underline{a}}\Gamma&(s_1,t)=\sum_{\tau=1}^{T-t} \beta^{\tau-1} [c_u \sum_{i=0}^{a_{\tau}-1}\  _{\underline{a}}P(s,i,\tau)(a_{\tau}-i)
+c_l \sum_{i=a_{\tau}}^M \ _{\underline{a}}P(s,i,\tau)(i-a_{\tau})].
\end{align}
\end{figure*}

Note that under policy $\pi$ and given the full observation of the state $s$ at time $t$, sequence of actions $\underline{\pi}_{s,t}=(\pi_{s,t}(1),...,\pi_{s,t}(T-t))$ are used until the next full observation of the state. In other words, one can compute the expected cost under policy $\pi$ as follows:
\begin{align}\label{eq:Wpi}
W^{\pi}_{T-1}(s)&=\  _{\underline{\pi}_{s,T-1}}\Gamma(s,T-1),\nonumber\\
W^{\pi}_{t}(s)&=\ _{\underline{\pi}_{s,t}}\Gamma(s,t)\nonumber\\
&+\sum_{\tau=1}^T \beta^{\tau} \sum_{s'=0}^{\underline{\pi}_{s,t}(\tau)-1}\ _{\underline{\pi}_{s,t}}P(s,s',\tau) W^{\pi}_{t+\tau}(s'),\nonumber\\
&\forall t=0,...,T-2.
\end{align}
Note that the total expected cost given in (\ref{eq:minprob1}) is equivalent to $W^{\pi}_{0}(s_0)$. 
Since the horizon $T$ and the state space $\mathcal{M}$ are both finite, the policy space $\Pi$ is finite, it is possible to define an optimal policy $\pi^{opt}$ to minimize the expected cost:
\begin{align*}
\pi^{opt}=\arg \min_{\pi} W^{\pi}_{0}(s_0).
\end{align*}
which provides the optimal sequences of $\underline{\pi}^{opt}_{s,t}$ with the length of $T-t$ for any $s\in \mathcal{M}$ and $t=0,...,T-1$, such that
\begin{align*}
W^{\pi^{opt}}_{t}(s)=\arg \min_{\pi\in \Pi} W^{\pi}_{t}(s).
\end{align*}

\subsection{\textbf{Optimal Policy: An Example and Complexity}}

As an example, we compute the optimal policy for the small horizon of $T=5$ numerically, for the parameters of $M=2$, $T=7$, $c_u=c_l=1$, $\beta=1$, and the following transition matrix:
\begin{align}\footnotesize
P=\begin{bmatrix}
    .8 & .2 & 0  \\
    .1 & .6 & .3 \\
     0 & .4 & .6 \\
\end{bmatrix}.\label{eq:Pran}
\end{align}
For any full observation of $s\in \{0,1,2\}$ at time $t\in \{0,...,6\}$, the optimal policy selects the corresponding action sequence given in Table \ref{fig:opttable}.
Unfortunately for large horizons, even numerically, the optimal action sequences are intractable.

\begin{table}[htb]
    \caption{The action sequences $\underline{\pi}^{opt}_{s,t}$ selected by the optimal policy for each full observed state $s$ at time $t$, for $M=2$, $T=7$, $c_u=c_l=1$, $\beta=1$, and $P$ given in (\ref{eq:Pran}).}
    \centerline{\includegraphics[width=.8\linewidth]{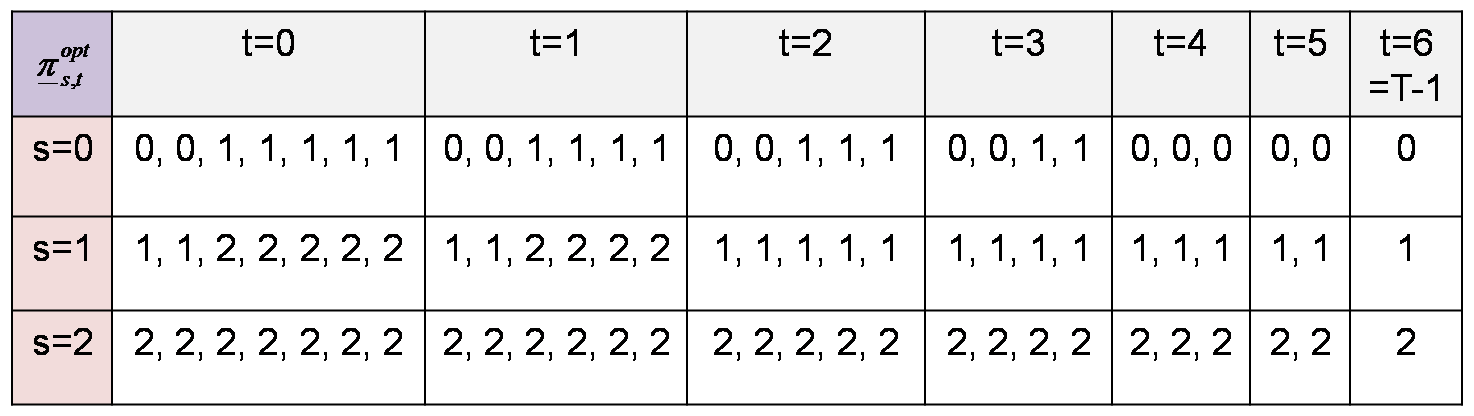}}
    \label{fig:opttable}
\end{table}

The recursive equations are not solvable for large horizons since the number of possible sequences is exponentially large. The complexity of the optimal policy, except for some trivial cases discussed in Appendix \ref{sec:special}, is given in the following proposition.

\begin{proposition}\label{prop:compDP}
The complexity of computing the optimal policy is equal to $\Theta((M+1)^{T+1})$. {\footnote{The notation $f(n)=\Theta(g(n))$ means that $f$ is bounded both above and below by $g$ asymptotically}}
\end{proposition}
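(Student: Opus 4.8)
The plan is to establish the bound $\Theta((M+1)^{T+1})$ by a direct accounting of the work performed by the backward recursion (\ref{eq:Wpi}), together with an argument that no cheaper search is available within the described scheme. First I would recall the structure of $\Pi$: a policy is specified by an action sequence $\underline{\pi}_{s,t}$ of length $T-t$ for every pair $(s,t)$ with $s\in\mathcal{M}$ and $t\in\{0,\ldots,T-1\}$, and each entry of such a sequence is drawn from $\mathcal{M}$, which has $M+1$ elements. Hence for a fixed pair $(s,t)$ the number of candidate sequences is exactly $(M+1)^{T-t}$.

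For the upper bound I would exploit the optimal substructure that holds \emph{across} full observations: once the state is fully revealed as $s'$ at time $t+\tau$, the continuation problem restarts and its optimal value $W^{\pi^{opt}}_{t+\tau}(s')$ is independent of how that observation was reached. This legitimizes solving the DP backward in $t$, computing $W^{\pi^{opt}}_{t}(s)$ at $(s,t)$ by minimizing the right-hand side of (\ref{eq:Wpi}) over the $(M+1)^{T-t}$ choices of $\underline{\pi}_{s,t}$ while reusing the already-tabulated continuation values. Counting the candidate sequences examined over all pairs gives
\begin{align*}
\sum_{s=0}^{M}\sum_{t=0}^{T-1}(M+1)^{T-t}=(M+1)\sum_{t=0}^{T-1}(M+1)^{T-t}=(M+1)\cdot\frac{(M+1)^{T+1}-(M+1)}{M},
\end{align*}
whose leading term is of order $(M+1)^{T+1}$; the extra factor $(M+1)$ in front of the geometric sum is precisely the number of observed states $s$ over which the recursion is tabulated. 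I would then check that the per-sequence bookkeeping, namely evaluating ${}_{\underline{a}}\Gamma(s,t)$ and the taboo probabilities ${}_{\underline{a}}P(s,s',\tau)$ through the iterated vector--matrix form implied by (\ref{eq:Ps1s2}), costs only a polynomial in $M$ and $T$ and therefore does not change the exponential order.

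For the matching lower bound, the point I would emphasize is that the inner minimization over $\underline{\pi}_{s,t}$ does \emph{not} decompose across the steps of the sequence. Both ${}_{\underline{a}}\Gamma(s,t)$ and ${}_{\underline{a}}P(s,s',\tau)$ depend on the full vector $\underline{a}=(a_1,\ldots,a_{T-t})$ through the never-crossing-below constraint, so the effect of the action $a_\tau$ is inseparably coupled to the earlier choices $a_1,\ldots,a_{\tau-1}$. Unlike the across-epoch structure exploited above, there is therefore no optimal-substructure shortcut \emph{within} a single inter-observation epoch, and the recursion is forced to enumerate all $(M+1)^{T-t}$ sequences at each pair, which yields $\Omega((M+1)^{T+1})$ and hence the claimed $\Theta$.

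The step I expect to be the main obstacle is this lower bound: making rigorous that the coupling induced by the taboo probabilities genuinely precludes any sub-exponential pruning of the sequence space, as opposed to merely observing that naive enumeration is exponential. I would argue that for generic transition matrices $P$ the optimal value of an early action in the sequence cannot be fixed without knowledge of the later actions, forcing a search over the entire product set $\mathcal{M}^{T-t}$; the degenerate situations in which this coupling collapses are exactly the trivial cases set aside in Appendix~\ref{sec:special}.
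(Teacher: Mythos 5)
Your proposal matches the paper's own proof: the paper simply counts $(M+1)^{T-t}$ candidate sequences per observed pair $(s,t)$ and sums to $(M+1)\sum_{t=0}^{T-1}(M+1)^{T-t}=\Theta((M+1)^{T+1})$, which is exactly your upper-bound accounting. Your additional concern about rigorously establishing the matching lower bound (that the coupling through the taboo probabilities precludes sub-exponential pruning) goes beyond the paper, whose proof asserts the enumeration count without addressing that point at all.
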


\proof
The computation needed to find the optimal sequence for each observed state at observation time of $t$ is equal to $(M+1)^{T-t}$, since we have $(M+1)$ states, the total complexity can be obtained as $(M+1) \sum_{t=0}^{T-1} (M+1)^{T-t}$ which equals to $\Theta((M+1)^{T+1})$.
\endproof
 
 As shown in the above proposition, the computation for the optimal policy grows exponentially with the time horizon and thus, we could only compute the optimal policy for very small horizons.

\section{\textbf{Percentile Policy}}\label{sec:percentile}

Since finding the optimal policy given in previous section is computationally intractable for large horizons, we introduce a new class of policies, \textit{i.e.} percentile policies. This class of policies is inspired by the presentation of the optimal policy based on the sequences corresponding to the full observed states and the observation times.

We define any percentile policy based on a table of thresholds $h_{s,t}$ for $s\in \mathcal{M}$, $t=0,...,T-1$ which could be any real number between 0 and 1. Based on the given thresholds, the action sequences $\underline{\pi}^{per}_{s,t}$ corresponding to any observed state $s$ and the observation time $t$ are given by the following recursive equations:
\begin{align}\label{eq:rptha}
\underline{\pi}^{per}_{s,t}(\tau)&=\min \{r \in \mathcal{M}: \frac{\sum_{i=0}^{r} [_{\underline{\pi}^{per}_{s,t}}P(s,i,\tau)]}{\sum_{i=0}^{M} [_{\underline{\pi}^{per}_{s,t}}P(s,i,\tau)]} \geq h_{s,t}\},\nonumber\\
&\ \forall \tau=1,...,T-t.
\end{align}
In other words, the action at each time corresponds to the lowest state above a given percentile, $h_{s,t}$, of the propagated belief at that time. The algorithm to compute the action sequences and the expected costs of a percentile policy for given thresholds $h_{s,t}$ is presented in Algorithm \ref{alg:FRP}.
Note that since the thresholds are given for any $s$ and $t$, we do not need to search for the best threshold and thus $\mathcal{H}_{s,t}$ given in Algorithm \ref{alg:FRP} are simply equal to $\{h_{s,t}\}$.
Note that the corresponding expected (discounted) costs of a percentile policy (denoted by $W^{per}_t(s)$) for the given thresholds $h_{s,t}$ can be computed from (\ref{eq:Wpi}) using the action sequences of $\underline{\pi}^{per}_{s,t}$ generated in (\ref{eq:rptha}).

\begin{algorithm}
\scriptsize
\caption{Percentile Policy}\label{alg:FRP}
\begin{algorithmic}[1]
\BState{$//$ Initializations}
\State Parameters $c_u$, $c_l$, $\beta$, $P$, $T$, $M$
\State  \State $s\in \{0,...,M\}$ and $t=0,...,T-1$, and the threshold sets $\mathcal{H}_{s,t}\subset [0,1]$
\State Define $\bar C(b,r):=c_u \sum_{i=0}^{r-1}(r-i) b(i)+c_l \sum_{i=r}^{M}(i-r) b(i)$
\BState{$//$ Main loops}
\State $t \gets T-1$
\For{each state $s \in \{0,...,M\}$ }
\State vector $b \gets P_{s,.}$, $s$-th row of matrix $P$
\State let $w_{min}=\infty$
\For{$h_{s,t} \in \mathcal{H}_{s,t}$}
\State let $r = \min\{r: \sum_{i=0}^r b(i) \geq h_{s,t}\}$
\State $\underline{\pi}_{s,t}^{per}(1) \gets r$
\State $\Gamma  \gets \bar C(b,r)$
\If {$\Gamma \leq w_{min}$}
\State $w_{min} \gets \Gamma$
\State $h^{per}_{s,t}  \gets h_{s,t}$
\EndIf
\EndFor
\State $W^{per}_t(s) \gets w_{min}$
\EndFor
\While{$t> 0$}
\State $t \gets t-1$
\For{each state $s \in \{0,...,M\}$ }
\State let $w_{min}=\infty$
\For{$h_{s,t} \in \mathcal{H}_{s,t}$}
\State $b \gets P_{s,.}$, $s$-th row of matrix $P$
\State let $r = \min\{r: \sum_{i=0}^r b(i) \geq h_{s,t}\}$
\State $\underline{\pi}_{s,t}^{per}(1) \gets r$
\State $\Gamma  \gets \bar C(b,r)$
\State let $\Lambda = 0$, and $m=1$
\For{$t'=t+1:1:T-1$}
\State $\Lambda \gets \Lambda+  \beta^{t'-t} m \sum_{i=0}^{r-1} b(i) W^{per}_{t'}(i)$
\State $m \gets m * \sum_{i=r}^{M} b(i)$
\State $b \gets \frac{\sum_{i=r}^{M} b(i)P_{i,.}}{\sum_{i=r}^{M} b(i)}$
\State let $r = \min\{r: \sum_{i=0}^r b(i) \geq h_{s,t}\}$
\State $\underline{\pi}_{s,t}^{per}(t'-t) \gets r$
\State $\Gamma \gets \Gamma +\beta^{t'-t} m \bar C(b,r)$
\EndFor
\State $w \gets \Gamma+\Lambda$
\If {$w \leq w_{min}$}
\State $w_{min} \gets w$
\State $h^{per}_{s,t}  \gets h_{s,t}$
\EndIf
\EndFor
\State $W^{per}_t(s) \gets w_{min}$
\EndFor
\EndWhile
\end{algorithmic}
\end{algorithm}

The complexity of the percentile policy is obtained in the following proposition (see Appendix \ref{app:comp} for the proof).
\begin{proposition}\label{prop:compPTf}
The complexity of computing the action sequences of the percentile policy with given thresholds is equal to:
\begin{align}\nonumber
\Theta(T^2(M+1)^{3}).
\end{align}
\end{proposition}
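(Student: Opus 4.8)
The plan is to obtain the bound by a direct operation count over the nested loop structure of Algorithm \ref{alg:FRP}, identifying the single dominant operation and multiplying it by the number of times it is executed. Since the thresholds $h_{s,t}$ are given, each threshold set $\mathcal{H}_{s,t}=\{h_{s,t}\}$ is a singleton, so the \textbf{for} loops over $\mathcal{H}_{s,t}$ contribute only a constant factor and can be ignored. There then remain three genuine levels of nesting: the outer \textbf{while} loop, which decrements $t$ from $T-2$ down to $0$ and hence runs $\Theta(T)$ times; the loop over the states $s\in\{0,\dots,M\}$, which runs $M+1$ times; and the innermost loop over $t'$ from $t+1$ to $T-1$, which runs $\Theta(T)$ times.

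First I would determine the cost of one pass through the body of the innermost loop. Each such pass performs the belief propagation $b \gets \frac{\sum_{i=r}^{M} b(i) P_{i,.}}{\sum_{i=r}^{M} b(i)}$, which forms a weighted combination of up to $M+1$ rows of $P$, each a vector of length $M+1$; this costs $\Theta((M+1)^2)$ operations. Every other statement in the body, namely recomputing $r=\min\{r:\sum_{i=0}^r b(i)\geq h_{s,t}\}$, updating the scalar $m$, accumulating $\Lambda$ via $\sum_{i=0}^{r-1} b(i)\, W^{per}_{t'}(i)$, and evaluating $\bar C(b,r)$, involves only a single pass over an $(M+1)$-vector and thus costs $\Theta(M+1)$. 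Hence the belief update dominates, and one iteration of the innermost loop costs $\Theta((M+1)^2)$.

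Multiplying through the loop bounds then gives the claimed order: the innermost loop contributes $\Theta(T)\cdot\Theta((M+1)^2)=\Theta(T(M+1)^2)$ per $(s,t)$ pair, the state loop multiplies this by $M+1$ to yield $\Theta(T(M+1)^3)$, and the outer loop over $t$ multiplies by a further $\Theta(T)$, producing the total $\Theta(T^2(M+1)^3)$. I would separately check that the initialization block at $t=T-1$ costs only $\Theta((M+1)^2)$ and is therefore absorbed into the lower-order terms.

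The main obstacle, and the one place where the count must be done carefully, will be recognizing that the belief update is $\Theta((M+1)^2)$ rather than $\Theta(M+1)$: it is precisely this matrix-vector-style operation that raises the dependence on the state-space size from quadratic to cubic, and a naive reading that treats it as a single vector operation would incorrectly yield $\Theta(T^2(M+1)^2)$. A secondary point is justifying the matching lower bound needed for the $\Theta$ (as opposed to $O$) claim: I would observe that in the worst case the threshold forces $r$ to be bounded away from $M$, so that the summations over $i\in\{r,\dots,M\}$ genuinely range over $\Theta(M+1)$ terms on each of the $\Theta(T^2(M+1))$ innermost passes, making the stated bound tight.
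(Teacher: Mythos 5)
Your proposal is correct and follows essentially the same route as the paper's proof: both identify the per-step propagation of the (conditional) state distribution as the dominant $\Theta((M+1)^2)$ operation, executed $\Theta(T-t)$ times for each of the $M+1$ observed states and each observation time $t$, and sum to obtain $\Theta\bigl(\sum_{t}(T-t)(M+1)^3\bigr)=\Theta(T^2(M+1)^3)$. The paper phrases this in terms of the recursive computation of the taboo probabilities $_{\underline{a}}P(s,\cdot,\tau)$ rather than the normalized belief update in Algorithm \ref{alg:FRP}, but these are the same operation, and your added remark on tightness of the $\Theta$ bound is a reasonable (if minor) refinement.
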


Next, we present two specific percentile policies and evaluate their performances in Section \ref{sec:simulation} numerically.
The first heuristic has a fixed threshold at all time $t$ after observing any state $s$; while the second optimizes to some extend the thresholds across time and space.

\subsection{\textbf{Myopic Policy}}
One simple heuristic policy is the myopic policy which minimizes the immediate expected cost ignoring its impact on the future cost. The myopic policy, for a given belief $b_{t}$ at any time $t$, selects an action which minimizes the immediate expected cost as follows: 
\begin{align}\label{eq:myopic}
\pi^{myopic}(b_{t})
:=\arg\min_{r \in \mathcal{M}} \bar C(b_{t};r).
\end{align}
where $\bar C(b_{t};r)$ is the immediate expected cost obtained by taking expectation of (\ref{eq:rwd}), as follows:
\begin{align}
\bar C(b_{t};r)&=\sum_{i\in \mathcal{M}} {b_{t}(i) C(i;r)}  = c_l \sum_{i=r}^M {b_{t}(i)(i-r)}\nonumber\\
&+c_u \sum_{i=0}^{r-1} {b_{t}(i)(r-i)}. \label{eq:vfnR}
\end{align}
Therefore, the myopic policy can be derived as given in the following proposition (see Appendix \ref{app_lproofs} for proof).

\begin{proposition}\label{C_mybb1}
For the given belief $b_{t}$, the myopic policy is given by
\begin{align} \label{eq:rmc1}
\pi^{myopic}(b_{t})=\min \{r \in \mathcal{M}: \sum_{i=0}^{r} {b_{t}(i)} \geq h^m\},
\end{align}
where $h^m=\frac{c_l}{c_l+c_u}$.
\end{proposition}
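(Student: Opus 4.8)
The plan is to treat the myopic problem $\min_{r \in \mathcal{M}} \bar C(b_{t};r)$ as a one-dimensional discrete convex minimization and to locate its minimizer through the sign of the forward difference. First I would introduce the cumulative belief $F(r):=\sum_{i=0}^{r} b_{t}(i)$, so that the claimed policy in (\ref{eq:rmc1}) is exactly the smallest $r$ with $F(r)\geq h^m$. The key quantity is the forward difference $\Delta(r):=\bar C(b_{t};r+1)-\bar C(b_{t};r)$ for $0\leq r \leq M-1$, which I would compute directly from the closed form (\ref{eq:vfnR}).

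Splitting (\ref{eq:vfnR}) into its under-utilization and over-utilization parts and writing out the sums for $r$ and $r+1$, the telescoping of the linear weights $(i-r)$ and $(r-i)$ cancels the common terms and leaves only surviving mass contributions. A short rearrangement should show that the under-utilization part contributes $-c_l\sum_{i=r+1}^{M} b_{t}(i) = -c_l\big(1-F(r)\big)$ while the over-utilization part contributes $+c_u\sum_{i=0}^{r} b_{t}(i)=c_u F(r)$, so that
\begin{align}
\Delta(r) = (c_l+c_u)\,F(r) - c_l. \nonumber
\end{align}
This step is the main obstacle only in the bookkeeping sense: care is needed with the boundary index $i=r$ in each sum so that no term is double counted or dropped when reindexing.

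From this identity the conclusion follows quickly. Since $F$ is non-decreasing in $r$, so is $\Delta(r)$, giving discrete convexity of $\bar C(b_{t};\cdot)$; moreover $\Delta(r)\geq 0$ if and only if $F(r)\geq c_l/(c_l+c_u)=h^m$. Setting $r^\ast:=\min\{r\in\mathcal{M}: F(r)\geq h^m\}$, every $r<r^\ast$ satisfies $\Delta(r)<0$ (cost strictly decreasing) and every $r\geq r^\ast$ satisfies $\Delta(r)\geq 0$ (cost non-decreasing), so $r^\ast$ minimizes $\bar C(b_{t};\cdot)$ and coincides with the action in (\ref{eq:rmc1}). I would finally remark that in the boundary case $\Delta(r^\ast)=0$ the adjacent action $r^\ast+1$ is also optimal, but convexity ensures $r^\ast$ remains the smallest minimizer, which is consistent with the $\min$ appearing in the statement.
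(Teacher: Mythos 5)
Your proposal is correct and follows essentially the same route as the paper: both compute the forward difference $\bar C(b_t;r+1)-\bar C(b_t;r)=(c_l+c_u)\sum_{i=0}^{r}b_t(i)-c_l$ and conclude from its sign change (the paper phrases this as uni-modality, you as discrete convexity) that the smallest $r$ with cumulative belief at least $h^m$ is the minimizer.
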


Note that the belief propagation and the action sequences of the myopic policy can be obtained from (\ref{eq:rptha}) using the thresholds $h_{s,t}=h^m$ for any $s\in \mathcal{M}$ and $t=0,...,T-1$.

\subsection{\textbf{Finite Resolution Percentile Policy}}\label{sec:FRP}

Next, we introduce a percentile policy which chooses the thresholds providing the minimum cost-to-go for the given observed states and observation times among all possible thresholds in the resolution set of $\mathcal{H}\subset [0,1]$. We call this policy Finite Resolution Percentile (FRP) policy, which chooses the thresholds as follows:
\begin{align}
h^{FRP}_{s,t}&= \arg \min_{h_{s,t} \in \mathcal{H}} W^{per}_t(s), \forall s\in\mathcal{M}, \ t=0,...,T-1,\nonumber
\end{align}
where the action sequence $\underline{\pi}^{per}_{s,t}$ is a function of $h_{s,t}$ given in (\ref{eq:rptha}). 
Let $\underline{\pi}^{FRP}_{s,t}$ denote the action sequences generated from (\ref{eq:rptha}) using the best thresholds $h^{FRP}_{s,t}$.

Note that FRP is the best heuristic among all percentile policies if one limits its attention to the choice of the thresholds from a finite set of values between 0 and 1.
In numerical simulations, we use trial and error method with a resolution of $\Delta$ among all possible values $\mathcal{H}=\{0,\Delta,2\Delta,...,1\}\cup \{h^m\}$ to find the best thresholds.
The algorithm to compute the action sequences and the expected costs of FRP policy is given in Algorithm \ref{alg:FRP} with the same threshold sets of $\mathcal{H}_{s,t}=\mathcal{H}$.

The complexity of FRP is given in the following proposition. 

\begin{proposition}\label{prop:compPT}
The complexity of computing the actions of FRP policy is equal to:
\begin{align}
\Theta(T^2(M+1)^{3}|\mathcal{H}|),
\end{align}
where $|\mathcal{H}|$ indicates the size of the threshold set $\mathcal{H}$ which for $\mathcal{H}= \{0,\Delta,2\Delta,...,1\}\cup \{h^m\}$ equals to $1/\Delta+2$. 
\end{proposition}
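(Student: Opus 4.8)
The plan is to reduce the claim to the already-established Proposition \ref{prop:compPTf} by observing that FRP is nothing but the percentile-policy computation of Algorithm \ref{alg:FRP} run with the enlarged threshold sets $\mathcal{H}_{s,t}=\mathcal{H}$ in place of the singleton sets $\{h_{s,t}\}$ used for a fixed-threshold percentile policy. Concretely, I would first isolate the per-pair cost implicit in Proposition \ref{prop:compPTf}: with a single given threshold, the total cost $\Theta(T^2(M+1)^3)$ is spread over the $(M+1)$ states and $T$ observation times, so evaluating one $(s,t)$ pair for one threshold costs $\Theta(T(M+1)^2)$. I would justify this directly from the inner \texttt{for} loop over $t'=t+1,\dots,T-1$ in Algorithm \ref{alg:FRP}: that loop has $O(T)$ iterations, and its dominant operation is the belief propagation $b \gets \frac{\sum_{i=r}^{M} b(i)P_{i,.}}{\sum_{i=r}^{M} b(i)}$, a weighted sum of up to $M+1$ rows of $P$, costing $O((M+1)^2)$ per iteration (the threshold lookup $r=\min\{r:\sum_{i=0}^r b(i)\ge h_{s,t}\}$ is only $O(M+1)$ and is subsumed).

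Next I would count the nested loops of Algorithm \ref{alg:FRP} under $\mathcal{H}_{s,t}=\mathcal{H}$. The outer structure ranges over the $M+1$ states $s$ and the $T$ times $t$; for each such pair the algorithm now additionally loops over all $|\mathcal{H}|$ candidate thresholds $h_{s,t}\in\mathcal{H}$, performing for each the same $\Theta(T(M+1)^2)$ sequence-and-cost evaluation identified above before taking the $\arg\min$ that defines $h^{FRP}_{s,t}$. Multiplying the per-threshold-per-pair cost by the number of thresholds and the number of pairs yields
\begin{align}
\Theta\bigl(T(M+1)^2\bigr)\times |\mathcal{H}| \times (M+1)\,T = \Theta\bigl(T^2(M+1)^3|\mathcal{H}|\bigr),\nonumber
\end{align}
which is exactly the asserted bound. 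Substituting $|\mathcal{H}|=1/\Delta+2$ for the resolution set $\mathcal{H}=\{0,\Delta,2\Delta,\dots,1\}\cup\{h^m\}$ then gives the stated concrete size.

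The step I expect to need the most care is the lower bound that upgrades the estimate from $O$ to $\Theta$, i.e. arguing that the factor $|\mathcal{H}|$ cannot be amortized away. The point to make is that distinct thresholds in $\mathcal{H}$ generally induce distinct action sequences $\underline{\pi}^{per}_{s,t}$ through the recursion (\ref{eq:rptha}), and hence distinct propagated beliefs along the $t'$ loop, so the $|\mathcal{H}|$ cost evaluations at a given $(s,t)$ share no reusable intermediate quantities and each genuinely costs $\Theta(T(M+1)^2)$; all of them must be carried out to form the minimization. Establishing this non-amortizability cleanly is the only substantive obstacle, since the remaining counting is the same bookkeeping that already underlies Proposition \ref{prop:compPTf}.
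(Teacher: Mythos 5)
Your proposal is correct and follows essentially the same route as the paper: the paper likewise reduces the claim to Proposition \ref{prop:compPTf}, noting that for each observation pair $(s,t)$ the cost-to-go must be evaluated once per threshold in $\mathcal{H}$, so the total complexity is $|\mathcal{H}|$ times that of the fixed-threshold percentile policy. Your additional remarks on the per-pair accounting and on why the $|\mathcal{H}|$ factor cannot be amortized away (distinct thresholds induce distinct action sequences and propagated beliefs) only make explicit what the paper treats as trivial.
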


The proof of the above Proposition is trivial from the Proposition \ref{prop:compPTf} since we need to compute the expected cost-to-go corresponding to each threshold in $\mathcal{H}$ and select the threshold which achieves the minimum expected cost, for each observation state and time. Thus, the complexity is $|\mathcal{H}|$ times the complexity of Proposition \ref{prop:compPTf}.
Comparing the above proposition and Proposition \ref{prop:compDP}, we conclude that computing FRP policy is polynomial and much faster than solving DP of optimal policy with the exponential complexity. And as we shall see, it can outperform the myopic policy.

\section{\textbf{Upper Bound and Lower Bound on Cost of Optimal Policy}}\label{sec:perf}

As the definition of the optimal policy suggested, any percentile policy can provide an upper bound on the total expected (discounted) cost of the optimal policy: 
\begin{align}\nonumber
W^{\pi^{opt}}_0(s_0) \leq W^{per}_0(s_0).
\end{align}
In addition, we present a lower bound on the expected cost of the optimal policy which equals to the expected cost of a genie-aided decision-maker whose state observation is available (with one unit of time delay).
The advantage of obtaining a (sufficiently tight) lower bound is that it allows us to use this lower bound to evaluate the performance of FRP and myopic policies.

\begin{proposition}\label{prop:FO}
The total expected (discounted) cost of the optimal policy for our POMDP problem is lower bounded by the total expected (discounted) cost of a genie policy which gets full observation about the actual states at all time steps with one unit of time delay (we call this genie policy FO), under the same initially observed state $s_0$, \textit{i.e.},
\begin{align}\label{eq:FO}
W^{\pi^{opt}}_0(s_0) \geq W^{FO}_0(s_0),
\end{align}
where
$W^{FO}(s,t)$ for $s\in \mathcal{M}$ and $t=0,...,T-1$ can be computed recursively as follows:
\begin{align}
W^{FO}_{T-1}(s)&=\bar C(P_{s,.},\pi^{myopic}(P_{s,.})), \nonumber\\
W^{FO}_t(s)&=\bar C(P_{s,.}, \pi^{myopic}(P_{s,.}))+\beta \sum_{i=0}^{M} P_{s,i} W^{FO}_{t+1}(i),\label{eq:WFO}
\end{align}
where $\pi^{myopic}(P_{s,.})$ denotes the action selected by the myopic policy presented in Proposition \ref{C_mybb1}, and the immediate expected cost $\bar C(P_{s,.},\pi^{myopic}(P_{s,.})$ can be obtained by (\ref{eq:vfnR}).
\end{proposition}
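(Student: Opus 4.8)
The plan is to read the FO policy as the optimal solution to a relaxed control problem in which the decision-maker is handed the exact value of $B_t$ one step after it occurs, and then to argue two separate facts: that the recursion (\ref{eq:WFO}) genuinely computes the optimal cost of this relaxed problem, and that the relaxed problem is never more expensive than the original POMDP, which together give (\ref{eq:FO}).

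First I would characterize the genie's information. At the instant it must commit to $r_t$, the FO decision-maker knows $B_0,\ldots,B_{t-1}$ exactly, so by the Markov property its belief about $B_t$ conditioned on $B_{t-1}=s$ is precisely the row $P_{s,.}$, independent of the action history. The essential observation is that, unlike in the POMDP, the genie's future observations do not depend on its current action: the true state is revealed with a one-step delay regardless of the value of $r_t$. Consequently the multi-stage optimization decouples across time, and the optimal genie action at each stage is the one minimizing the immediate expected cost $\bar C(P_{s,.},r)$, which by Proposition \ref{C_mybb1} is exactly $\pi^{myopic}(P_{s,.})$. Substituting this action, and propagating the belief through the unconstrained matrix $P$ (no lower-bound conditioning applies once the state is fully revealed, so the relevant kernel is $P$ rather than the taboo kernel ${}_{\underline{a}}P$ of (\ref{eq:Ps1s2})), yields exactly (\ref{eq:WFO}); a backward induction on $t$ from $T-1$ then confirms that $W^{FO}_t(s)$ is the optimal genie cost-to-go.

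Next I would establish the information-dominance that delivers the inequality. Let $\mathcal{F}^{POMDP}_t$ denote the sigma-field generated by the data available to an admissible policy $\pi\in\Pi$ when it chooses $r_t$, and let $\mathcal{F}^{FO}_t$ denote the corresponding sigma-field for the genie. Both commit to $r_t$ using data through time $t-1$; since the genie knows $B_0,\ldots,B_{t-1}$ and the actions are policy-determined, it can reconstruct the entire POMDP observation record (the times of full revelation and the implicit lower bounds follow deterministically from the past true states and actions). Hence $\mathcal{F}^{POMDP}_t\subseteq\mathcal{F}^{FO}_t$ for every $t$, so the genie can implement any $\pi\in\Pi$. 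The optimal genie cost therefore lower-bounds the cost of every admissible policy, and in particular $W^{FO}_0(s_0)\le W^{\pi^{opt}}_0(s_0)$.

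I expect the delicate step to be the decoupling argument of the second paragraph: making precise why the guarantee of future full observation renders the genie's action myopically optimal, and in particular why its belief propagates through $P$ rather than through the conditioned kernel that governs the POMDP. Once this is pinned down, the containment argument is standard, and the bound follows from the elementary fact that enlarging the information available to a minimizing controller cannot increase its optimal cost.
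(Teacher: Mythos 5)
Your proof is correct, but it takes a genuinely different route from the paper's. The paper works entirely inside the belief-based dynamic program of Appendix~\ref{app_beliefDP}: it reduces the proposition to the pointwise statement $V_t(b_t)\geq V^{FO}_t(b_t)$ for \emph{arbitrary} beliefs (Lemma~\ref{prop:FO2}) and proves this by backward induction, where the key step is the concavity of the value function in the belief vector (Lemma~\ref{L_convvf}, itself proved by a separate induction). Concavity is invoked to show $V_{t+1}(T_{r}[b_t]P)\geq \sum_{i\geq r} b_t(i)V_{t+1}(P_{i,.})/\sum_{j\geq r}b_t(j)$, i.e., that replacing the coarse censored posterior by the mixture of point-mass posteriors can only lower the continuation cost; combining this with the defining property of the myopic action and the induction hypothesis gives the bound. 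You instead give an information-relaxation argument: the genie's filtration contains the (policy-dependent) POMDP filtration because the censored observations are deterministic functions of the past true states and actions, so every admissible POMDP policy is genie-implementable at equal cost; and the relaxed problem decouples across stages because the state process is exogenous and the genie's future information does not depend on its actions, which makes the myopic rule optimal for the genie and yields exactly the recursion~(\ref{eq:WFO}). The two arguments encode the same principle --- concavity of the value function in the belief is precisely the analytic statement that refining information cannot increase the optimal cost --- but yours bypasses Lemma~\ref{L_convvf} entirely and is arguably more transparent, at the price of having to make the filtration-containment and simulation step rigorous (which you do address); the paper's route stays within the DP recursions, proves the slightly stronger belief-level inequality, and obtains the concavity lemma as a byproduct of independent interest.
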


We use the equivalent belief-based formulation presented in Appendix \ref{app_beliefDP} to prove the above proposition in Appendix \ref{app:PBFO}.

In case that the upper bound $W^{per}_0(s_0)$ and lower bound $W^{FO}_0(s_0)$ are close to each other, we can make sure that the performance of the percentile policy is close to the optimal policy. Thus, in Section \ref{sec:simulation}, we use numerical simulations of the ratio of $W^{per}_0(s_0)/W^{FO}_0(s_0)$ to evaluate the performance of our presented heuristic percentile policies.

\section{\textbf{Extensions of Policies for a Given Initial Belief}}\label{sec:belief}
In our formulation, we assume that the decision-maker is aware of the initial state of the system $s_0$. In many applications, the decision-maker might only have an estimate of how likely it is for the system to be in a given state. More precisely, let us assume that only a belief about the actual state is given (often it makes sense to pick this to be a uniform distribution over the state space to represent little bias towards any given state). Mathematically, we denote the initial belief with a vector $b_0=[b_0(0),...,b_0(M)]$, which represents the initial probability distribution of $B_0$ over all possible states in $\mathcal{M}$. Note that our decision starts from time step $1$.
Now, given the initial belief vector $b_0$, the goal is to minimize the total expected (discounted) cost over the horizon $T$, given by:
\begin{align}\nonumber
\mathbb{E}\{\sum_{t=1}^T {\beta ^{t-1}  C(B_t;r_t)}|b_0\}.
\end{align}
Next we show that a simple extension of our work in previous sections can be exploited in order to account for the initial belief (and lack of full initial knowledge). 

In this case, every policy has to account for the initial phase where the exact state of the system has never been observed. In particular, the optimal policy consists of an initial sequence that is computed based on the initial belief $b_0$. Thus, all policies need one more action sequence denoted by $\underline{\pi}_{b_0}$ with the length of $T$, beside all sequences of $\underline{\pi}_{s,t}$ for $s\in \mathcal{M}$ and $t=1,...,T-1$ presented in Section \ref{sec:seqDP}
and one more step of recursion is needed as follows:
\begin{align}\nonumber
W^{\pi}_{0}(b_0)&=\sum_{s=0}^M b_0(s) \times [ _{\underline{\pi}_{b_0}}\Gamma(s,t)\nonumber\\
&+\sum_{\tau=1}^T \beta^{\tau} \sum_{s'=0}^{\underline{\pi}_{b_0}(\tau)-1}\ _{\underline{\pi}_{b_0}}P(s,s',\tau) W^{\pi}_{\tau}(s')].
\end{align}
Similarly, given the decision-maker's initial belief, a percentile policy has to be appended by an initial phase. In this phase, the policy has to also include an initial sequence of actions prior to the first full observation of the state:
\begin{align}\nonumber
&\underline{\pi}^{per}_{b_1}(\tau)=\min \{r \in \mathcal{M}:\nonumber\\
&\frac{\sum_{i=0}^{r} [\sum_{s=0}^M b_0(s)\ _{\underline{\pi}_{b_0}}P(s,i,\tau)]}{\sum_{i=0}^{M} [\sum_{s=0}^M b_0(s)\_{\underline{\pi}_{b_0}}P(s,i,\tau)]} \geq h_{b_1}\}, \ \forall \tau=1,...,T.
\end{align}
where the corresponding threshold is denoted by $h_{b_1}$ in general and $h^{FRP}_{b_1}$ for FRP policy.

\section{\textbf{Numerical Results}}\label{sec:simulation}

In this section, first, we present two examples, one where the optimal policy and FRP policy meet and select the same action sequences, and second, where the optimal policy can not be a percentile policy, therefore FRP may select different action sequences from those selected by the optimal policy.
Next, we compare the performances of FRP and myopic policies for a large horizon and evaluate their performances compared to the optimal policy for a small horizon.

\subsection{\textbf{An Example where FRP Policy is Optimal}}

Here, we see an example with the same setting as Table \ref{fig:opttable} where FRP policy results in the same action sequences as the optimal policy with parameters $M=2$, $T=7$, $c_u=c_l=1$, $\beta=1$, and the transition matrix given in (\ref{eq:Pran}). The thresholds selected by FRP policy is shown in Table \ref{fig:hpttable} for this example. 
Thus, the action sequences $\underline{\pi}^{FRP}_{s,t}$ generated by the given thresholds are equivalent to the action sequences $\underline{\pi}^{opt}_{s,t}$ given in Table \ref{fig:hpttable}.
Note that the threshold selected by the myopic policy is $h^m=0.5$.

\begin{table}[b]
	\caption{The thresholds $h^{FRP}_{s,t}$ selected by FRP policy for each full observed state $s$ at time $t$, for the same setting as Table \ref{fig:opttable}.}
    \centerline{\includegraphics[width=.8\linewidth]{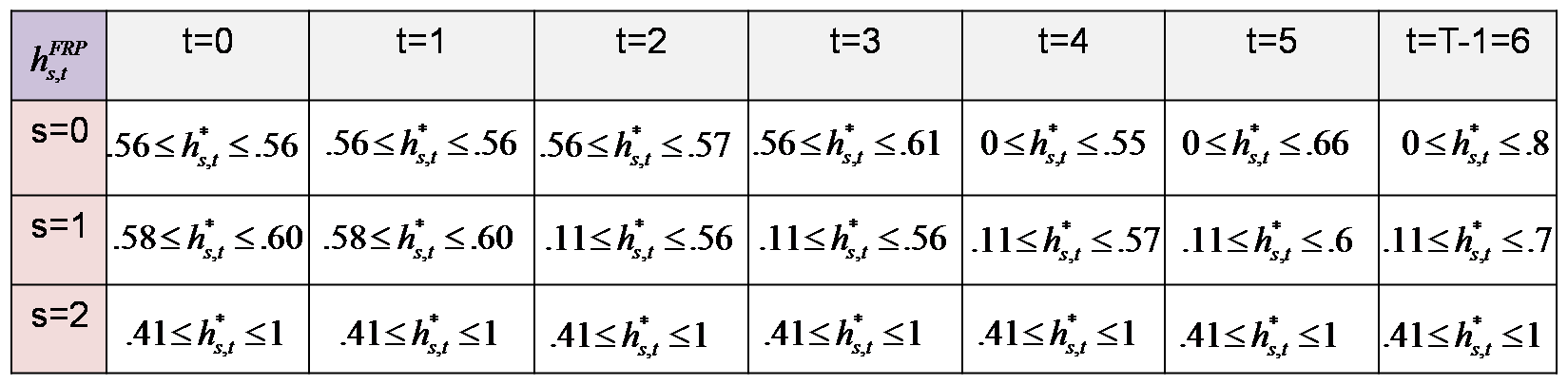}}    
    \label{fig:hpttable}
\end{table}

Note that each action sequence of optimal policy $\underline{\pi}_{s,t}^{opt}$ given in Table \ref{fig:opttable} can be regenerated by applying time-varying non-unique thresholds, $h^{opt}_{s,t}(\tau), \tau=t+1,...,T$. The valid values for each threshold $h^{opt}_{s,t}(\tau)$ could be shown by a range of real numbers between 0 and 1.
The FRP policy tries to select the threshold $h^{FRP}_{s,t}$ inside the intersection of the ranges of $h^{opt}_{s,t}(\tau), \tau=t+1,...,T$. For instance, Fig. \ref{fig:hptopt} shows the range of valid thresholds for the observation pairs $(s,t)$ of $(0,0)$, $(1,0)$ and $(1,3)$. For these pairs of observation, FRP policy could select the threshold $h^{FRP}_{s,t}$ from the intersection of the ranges of valid thresholds $h^{opt}_{s,t}(\tau), \tau=t+1,...,T$ of the optimal policy. 

\begin{figure*}
    \centerline{\includegraphics[width=\linewidth]{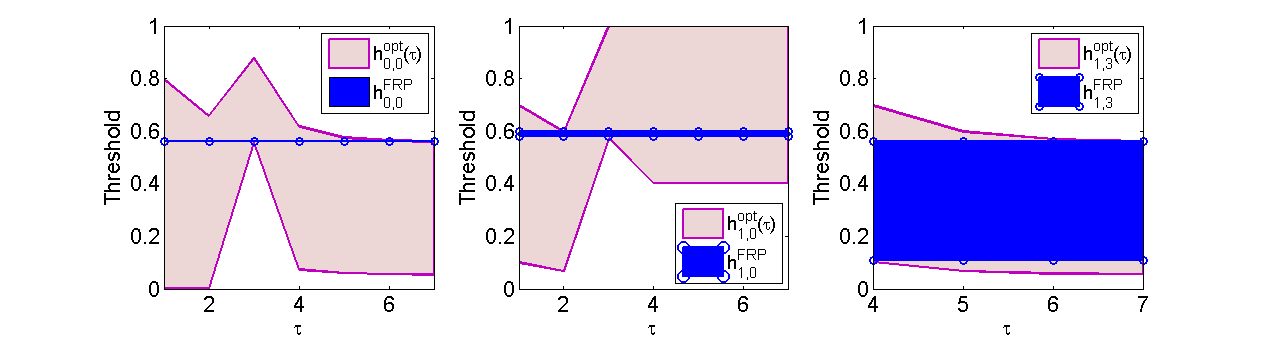}}
    \caption{The range of valid thresholds of the optimal policy $h^{opt}_{s,t}(\tau)$ and the threshold of FRP policy $h^{FRP}_{s,t}$ corresponding to the full observation of $s,t$, versus time step $\tau$, for the same setting as Table \ref{fig:opttable}.}
    \label{fig:hptopt}
\end{figure*}

\subsection{\textbf{An Example where Optimal Policy is not a Percentile Policy}}

Here, we see an example where the optimal policy can not be a percentile policy and its action sequences are different from those selected by FRP policy. The parameters are assumed to be $M=2$, $T=7$, $c_u=c_l=1$, $\beta=1$, and transition matrix is given by:
\begin{align}\footnotesize
P=\begin{bmatrix}
    .9 & .1 & 0  \\
    .1 & .8 & .1 \\
     0 & .1 & .9 \\
\end{bmatrix}.\label{eq:Pe01}
\end{align}

The action sequences $\underline{\pi}^{FRP}_{s,t}$ generated by FRP policy and the action sequences $\underline{\pi}^{opt}_{s,t}$ generated by the optimal policy are given in Table \ref{fig:aoptpt_table}.
The action sequences are the same for the optimal and FRP policies except in one case, where $s=0$ and $t=0$, \textit{i.e.} $\underline{\pi}^{opt}_{0,0}(\tau)=1$, $\underline{\pi}^{FRP}_{0,0}(\tau)=2$ for $\tau=6,7$.
The reason for $\underline{\pi}^{opt}_{0,0}\neq \underline{\pi}^{FRP}_{0,0}$, is that the action sequence $\underline{\pi}^{opt}_{0,0}$ is not following the percentile structure. In other words, we can not find a fixed threshold to reconstruct this action sequence.

\begin{table}[t]
    \caption{The action sequences $\underline{\pi}^{FRP}_{s,t}$ and $\underline{\pi}^{opt}_{s,t}$ selected by FRP and optimal policies for each full observed state $s$ at time $t$, for $M=2$, $T=7$, $c_u=c_l=1$, $\beta=1$ and $P$ matrix given in (\ref{eq:Pe01}). The sequences are similar for all $(s,t)$ except $(0,0)$. The FRP actions different from optimal actions are denoted by red color.}
    \centerline{\includegraphics[width=.8\linewidth]{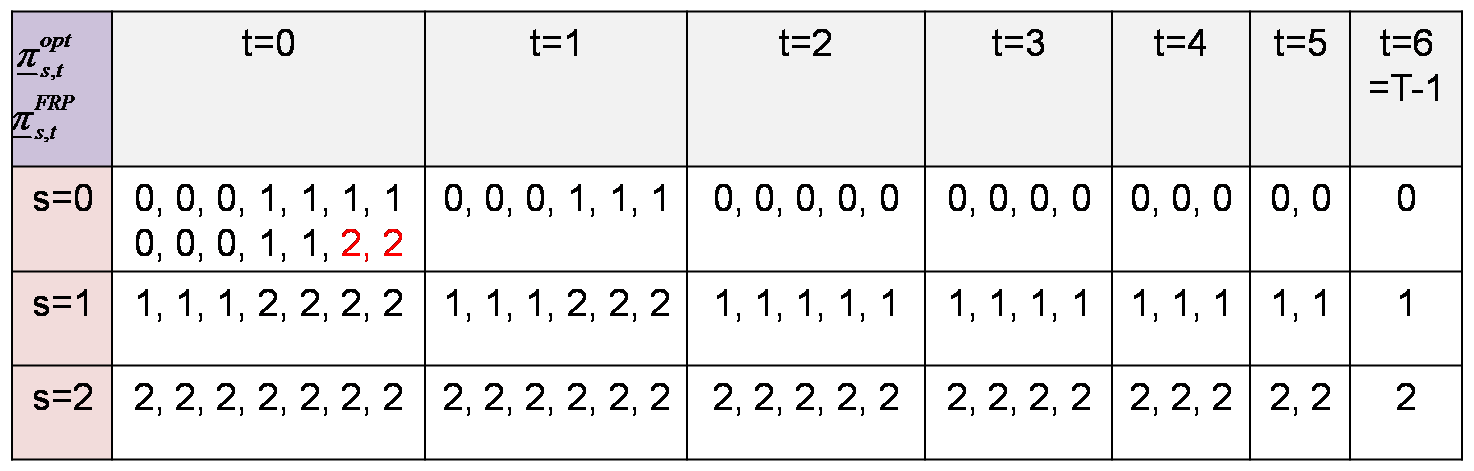}}
    \label{fig:aoptpt_table}
\end{table}

As Fig. \ref{fig:hh00_e01} shows, for the observation pair of $s=0$ and $t=0$, the intersection of the ranges of valid thresholds, $h^{opt}_{s,t}(\tau), \tau=t+1,...,T$ are empty. The threshold selected by FRP policy is higher than the thresholds of the optimal policy for $\tau\geq 6$. This confirms that the optimal policy is not necessarily a percentile policy.

\begin{figure}[b]
    \centerline{\includegraphics[width=.9\linewidth]{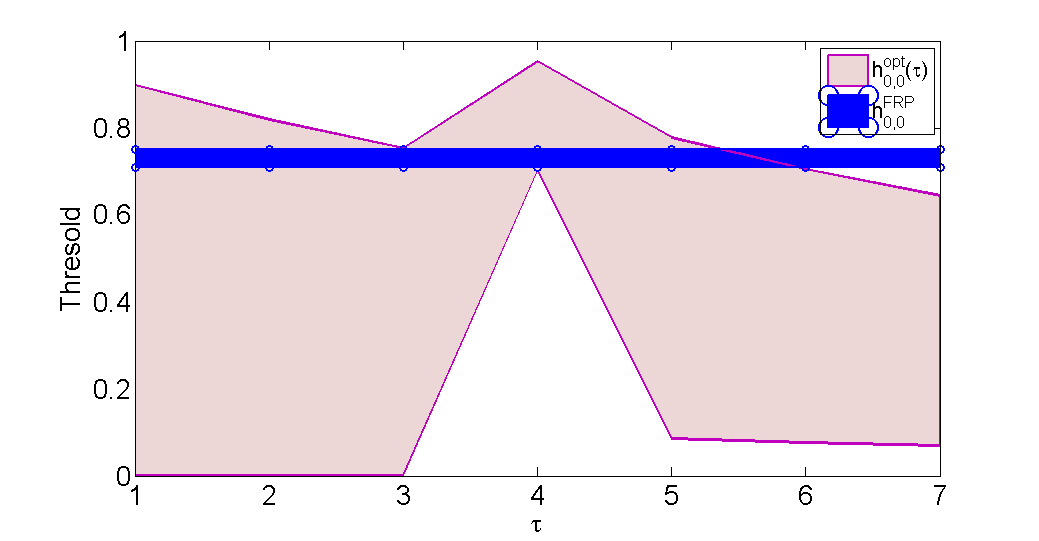}}
    \caption{The range of valid thresholds of the optimal policy $h^{opt}_{s,t}(\tau)$ and the threshold of FRP policy $h^{FRP}_{s,t}$ corresponding to the full observation of $s=0$ at time $t=0$, versus time step $\tau$, for the same setting as Table \ref{fig:aoptpt_table}.}
    \label{fig:hh00_e01}
\end{figure}

\subsection{\textbf{Evaluation of Percentile Policies}}

We now present some numerical results to evaluate the performance of the introduced FRP policy.
The simulation parameters, except in the figures where their effect is considered, are fixed as follows: the highest state $M =4$, the under-utilization cost coefficient $c_l=1$, the discount factor $\beta=1$, the horizon $T=7$ whenever we compute the optimal policy and $T=30$ otherwise. We consider the following transition matrix:
\begin{align}\footnotesize
P=\begin{bmatrix}
    1-\epsilon & \epsilon & 0 & & \dots  & 0 \\
    \epsilon & 1-2\epsilon & \epsilon & & \dots & 0 \\
   \vdots & \vdots &  & \ddots & \vdots \\
    0 & \dots & & \epsilon & 1-2\epsilon & \epsilon \\
    0 & \dots &  & 0 & \epsilon & 1-\epsilon
\end{bmatrix},\label{eq:Peps}
\end{align}
with the size of $(M+1)\times (M+1)$, for given $\epsilon=0.3$.
Note that the results achieved by increasing $c_u$ for a fixed $c_l$ is equivalent to those achieved by decreasing $c_l$ for a fixed $c_u$ and vice verse.
We consider two initial cases: (i) Full observation state $s_0=0$, (ii) Given uniform initial belief $b_0$.


Figures \ref{fig:hptbeta} and \ref{fig:hptcu} show the best thresholds selected by FRP policy for initial full observation of $s_0=0$ and the uniform initial belief $b_0$ versus the discount factor $\beta$ (for $c_u=5c_l$) and versus $c_u$ (for $\beta=1$), respectively.

\begin{figure}[b]
    \centerline{\includegraphics[width=.9\linewidth]{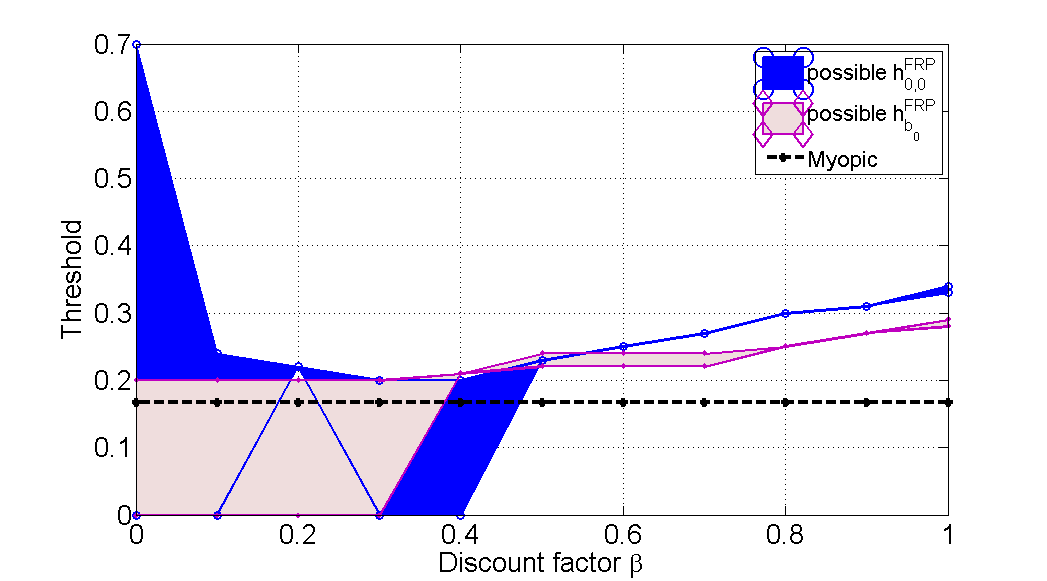}}
    \caption{The thresholds versus the discount factor $\beta$, for $M=4$, $c_l=1$, $c_u=5$ and $\epsilon=0.3$.}
    \label{fig:hptbeta}
\end{figure}

\begin{figure}[t]
    \centerline{\includegraphics[width=.9\linewidth]{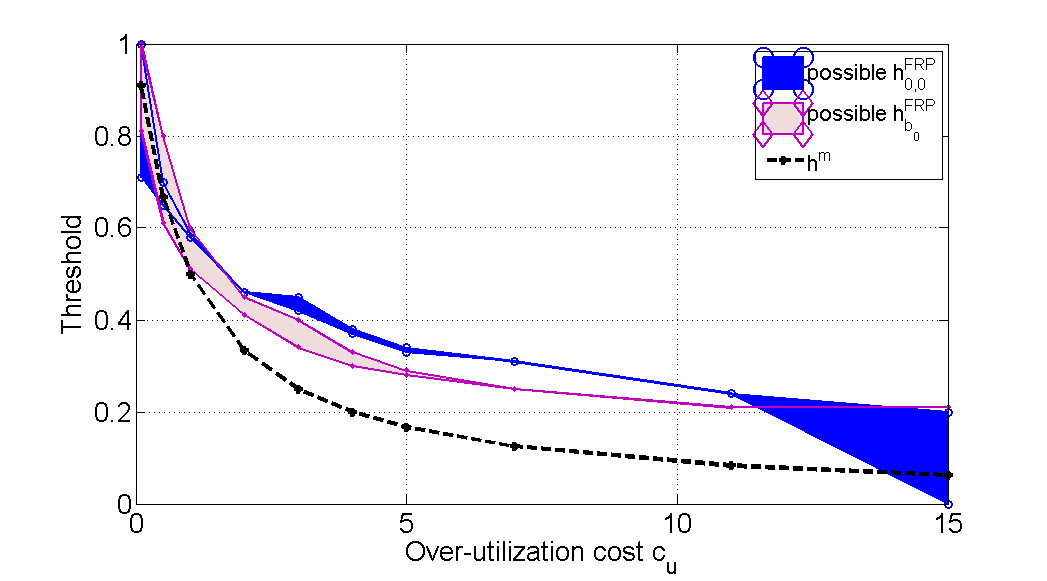}}
    \caption{The thresholds versus the over-utilization cost $c_u$, for $M=4$, $c_l=1$, $\beta=1$ and $\epsilon=0.3$.}
    \label{fig:hptcu}
\end{figure}

The thresholds are not unique and could be any values in the corresponding colored ranges. 
As shown in Fig. \ref{fig:hptbeta}, for small values of $\beta$ the threshold of the myopic policy, $h^m=c_l/(c_u+c_l)=0.1667$ is a valid threshold for FRP policy. But for larger values of $\beta$ the myopic policy can not be a good heuristic, since the future is more important. The threshold of FRP policy increases with $\beta$ to increase the chance of full observation which could be useful for future decisions.  
The thresholds decrease with increasing $c_u$, as is obvious from Fig. \ref{fig:hptcu}, since for larger values of $c_u$ the policies prefer to behave more conservatively, choosing smaller action.
For larger values of $c_u$ the thresholds of FRP policy are higher than those chosen by the myopic policy. This difference is more obvious for $h_{b_0}$.

The best thresholds of FRP policy for initial full observation of $s_0=0$ and the uniform initial belief $b_0$ versus the horizon $T$ are given in Fig. \ref{fig:hptT}. As shown in the figure, for larger horizons, the selected thresholds are higher than the threshold of the myopic policy $h^m$ in order to increase the chance of full observation useful for future decisions.

\begin{figure}[b]
    \centerline{\includegraphics[width=.9\linewidth]{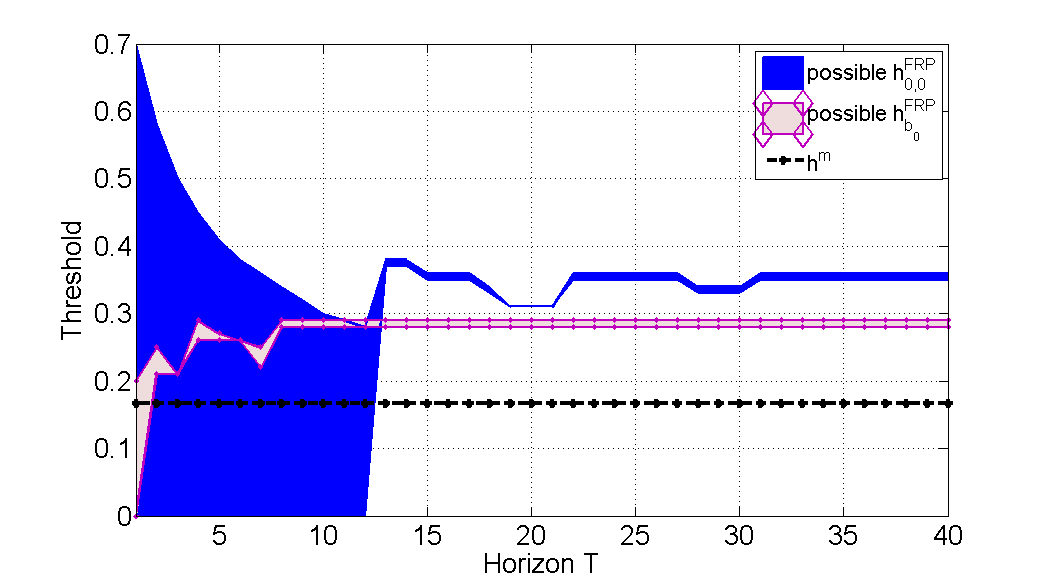}}
    \caption{The thresholds versus the horizon $T$, for $M=4$, $\beta=1$, $c_l=1$, $c_u=5$ and $\epsilon=0.3$.}
    \label{fig:hptT}
\end{figure}

Figure \ref{fig:hpteps} considers the effect of transition matrix $P$ on the thresholds selected by FRP policy. To see its effect we plot the threshold versus parameter $\epsilon$ of $P$ matrix given in (\ref{eq:Peps}). Smaller $\epsilon$ means that $P$ is closer to identity matrix (static process) where full observation could reveal more information about the future and decrease the future cost a lot. Thus the difference between the selected thresholds and the threshold of the myopic policy are more significant (specially for $b_0$).

\begin{figure}[t]
    \centerline{\includegraphics[width=.9\linewidth]{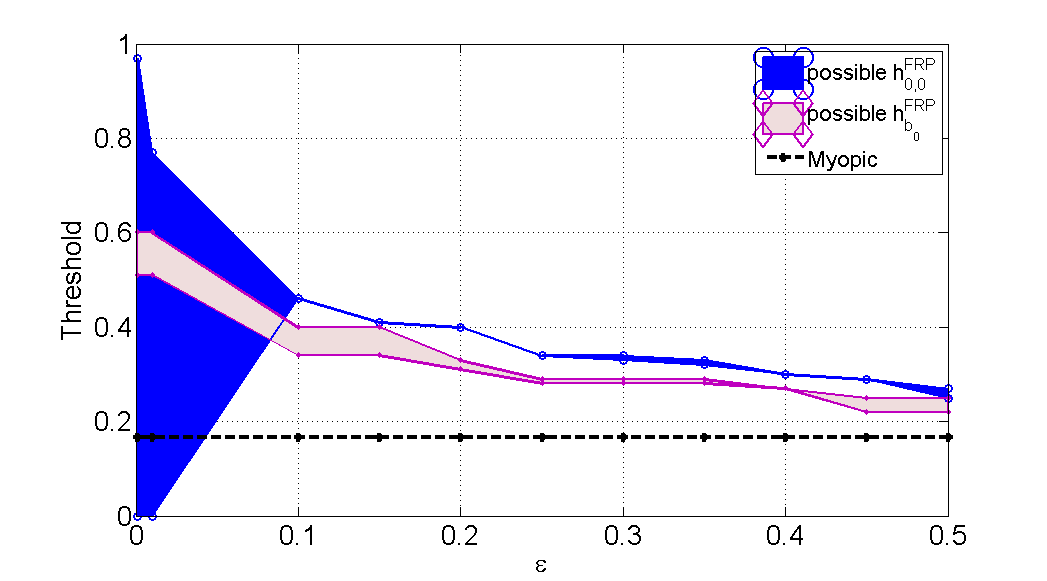}}
    \caption{The thresholds versus $\epsilon$, for $M=4$, $c_u=5$, $c_l=1$, $\beta=1$, $T=7$.}
    \label{fig:hpteps}
\end{figure}

\subsection{\textbf{Performance of Percentile Policies}}

We now present the numerical results to evaluate the performances of the percentile policies. For this evaluation we use the ratio of the total expected cost of the percentile policies to the expected cost of FO genie (as an upper bound on the ratio of their expected cost to the expected cost of the optimal policy). Figure \ref{fig:perfbeta} shows the cost ratios versus the discount factor, $\beta$, for $c_u=5c_l$. The percentile policies have a better performance for smaller values of $\beta$ where the myopic policy also performs well, but for larger values of discount factor since the future matters more, FRP policy outperforms the myopic policy. Interestingly, FRP policy has a cost ratio less than $1.7$ for all values of $\beta$, for the given parameters, \textit{i.e.} the cost of FRP policy is not more than $1.7$ times the cost of the optimal policy.

\begin{figure}[b]
    \centerline{\includegraphics[width=.9\linewidth]{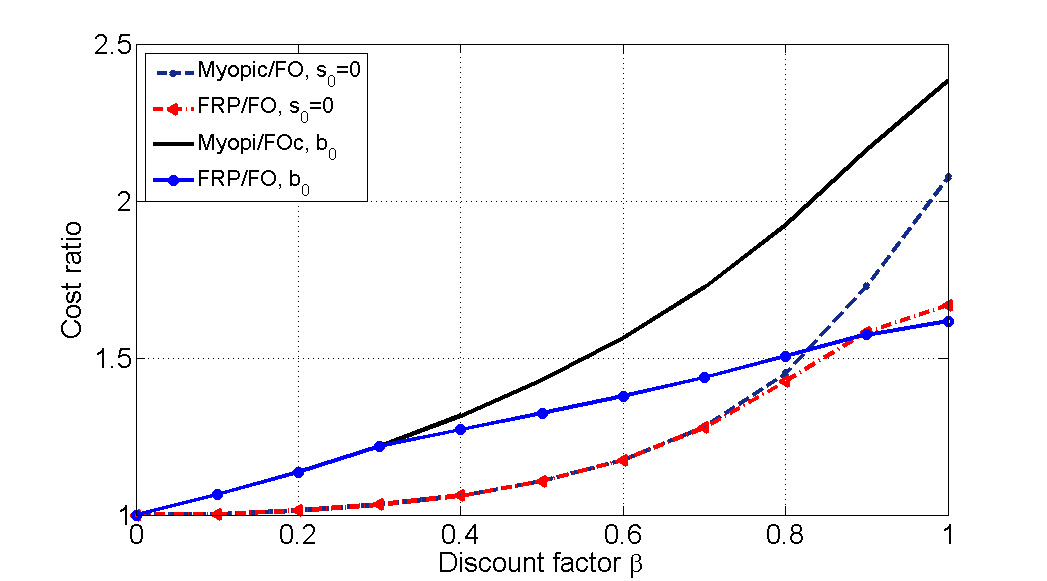}}
    \caption{The cost ratios versus the discount factor $\beta$, for $M=4$, $c_l=1$, $c_u=5$, $T=30$ and $\epsilon=0.3$.}
    \label{fig:perfbeta}
\end{figure}

Figure \ref{fig:perfcu} shows the cost ratios versus the over-utilization cost, $c_u$, for $c_l=1$. The percentile policies have a better performance for smaller values of $c_u$ and FRP policy outperforms the myopic policy for medium values of $c_u$. For both large/small values of $c_u$, the performances of the myopic and FRP policies are close, since both of them behave as conservative/aggressive as the optimal policy.

\begin{figure}[t]
    \centerline{\includegraphics[width=.9\linewidth]{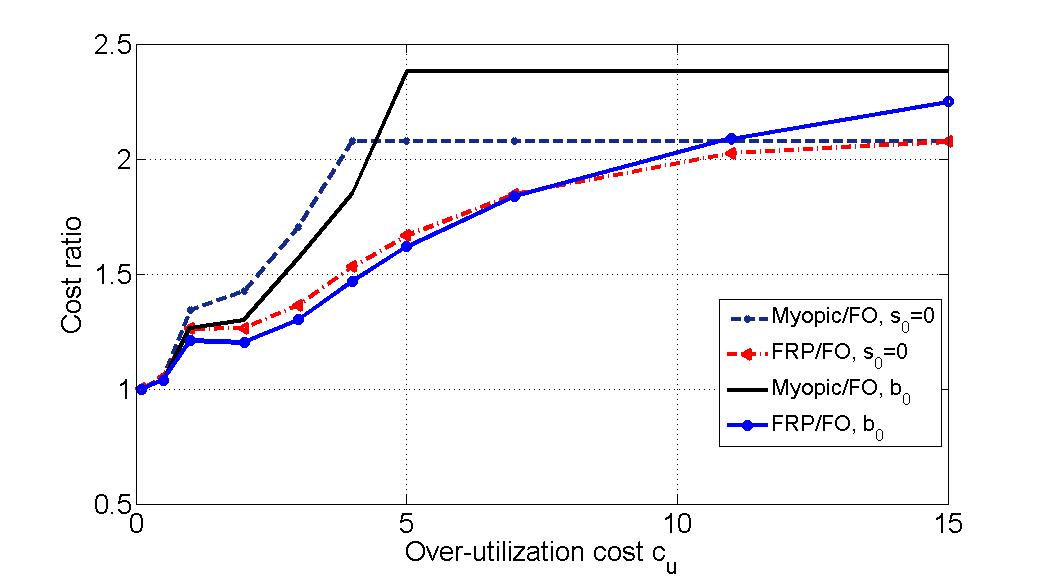}}
    \caption{The cost ratios versus the over-utilization cost, $c_u$, for $M=4$, $c_l=1$, $\beta=1$, $T=30$ and $\epsilon=0.3$.}
    \label{fig:perfcu}
\end{figure}

Figure \ref{fig:perfT} shows the cost ratios versus versus the horizon $T$. It is shown that the cost ratios of the percentile policies are increasing by $T$.
Figure \ref{fig:perfesp} considers the effect of transition matrix $P$ on the performances of the myopic and FRP policies. For smaller values of $\epsilon$, the $P$ matrix given in (\ref{eq:Peps}) is closer to the identity matrix where the ratio to the cost of FO genie policy may not be good for the performance evaluation (the ratio to the cost of the optimal policy might be much better).

\begin{figure}[htb]
    \centerline{\includegraphics[width=.9\linewidth]{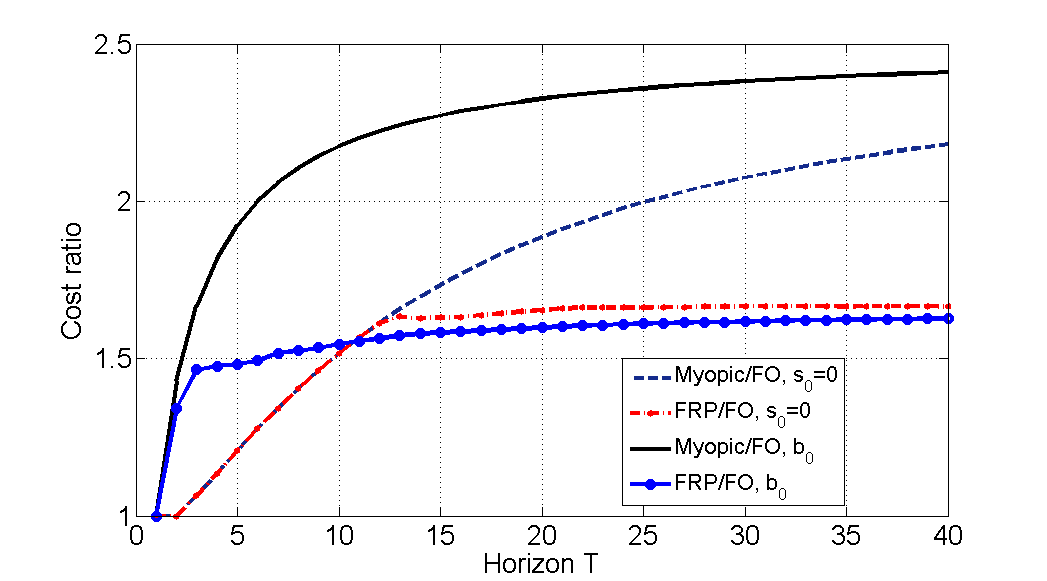}}
    \caption{The cost ratios versus the horizon $T$, for $M=4$, $\beta=1$, $c_l=1$, $c_u=5$ and $\epsilon=0.3$.}
    \label{fig:perfT}
\end{figure}

\begin{figure}[htb]
    \centerline{\includegraphics[width=.9\linewidth]{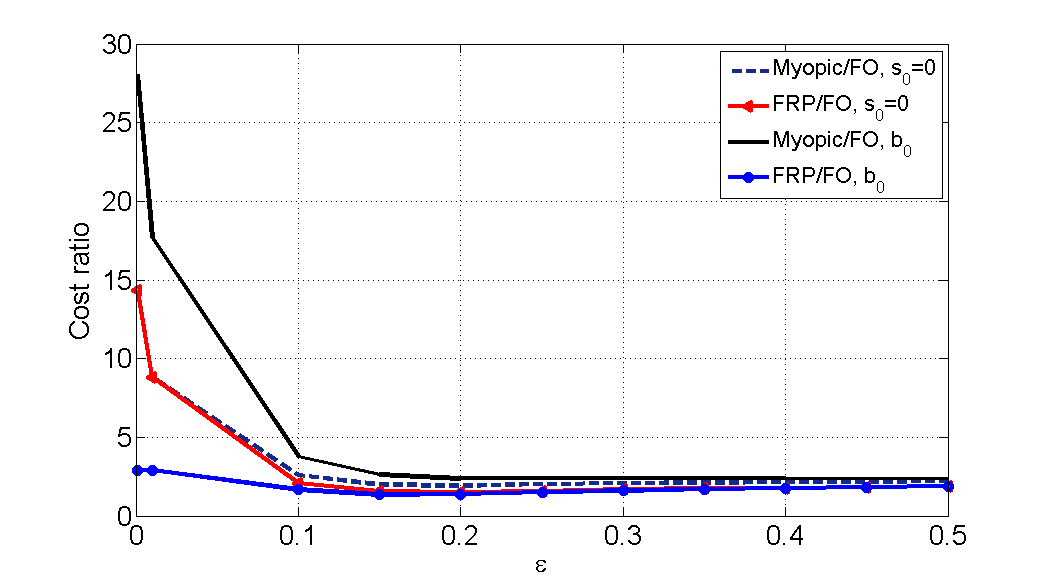}}
    \caption{The cost ratios versus $\epsilon$, for $M=4$, $c_u=5$, $c_l=1$, $\beta=1$, $T=30$.}
    \label{fig:perfesp}
\end{figure}

To see the performances for a larger state set size, Fig. \ref{fig:perfcuM20} shows the cost ratios of policies versus the over-utilization cost $c_u$ for $M=19$ and the transition matrix of 
\begin{align}\footnotesize
P=\begin{bmatrix}
    .6 & .1 & .2 & .1 & 0 & &  &\dots  & 0 \\
     .4 & .2 & .1 & .2 & .1 & 0 & & \dots  & 0 \\
     .3 & .1 & .2 & .1 & .2 & .1 & 0 & \dots  & 0 \\
    \vdots & \vdots &  & \ddots & \vdots \\
    0 & \dots  & 0 & .3 & .1 & .2 & .1 & .2 & .1 \\
    0 & \dots  &  & 0 & .3 & .1 & .2 & .1 & .3 \\
    0 & \dots  & & & 0 & .3 & .1 & .2 & .4
\end{bmatrix},\label{eq:P312121}
\end{align}
with the size of $20\times 20$. 
Comparing this figure to Fig. \ref{fig:perfcu}, we can conclude that the cost ratios are larger for a larger state set size.

\begin{figure}[htb]
    \centerline{\includegraphics[width=.9\linewidth]{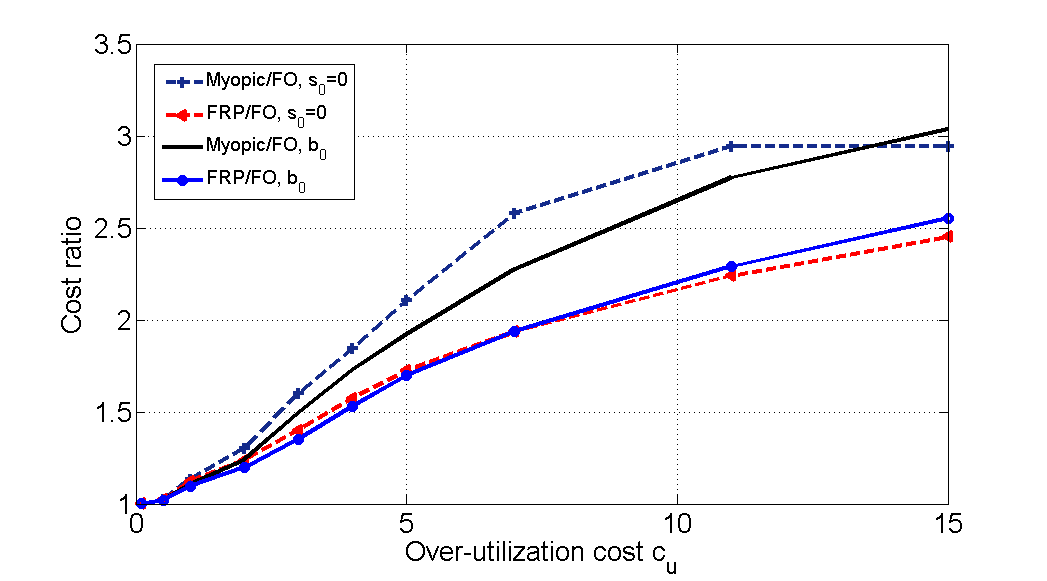}}
    \caption{The cost ratios versus the over-utilization cost, $c_u$, for $M=19$, $c_l=1$, $\beta=1$, $T=30$.}
    \label{fig:perfcuM20}
\end{figure}

\subsection{\textbf{Comparison with the Optimal Policy}}

To compare the performances of the percentile policies with the optimal policy, we consider a small horizon of $T=7$ due to the complexity of DP. 
Figure \ref{fig:cbeta} shows that the expected cost of FRP and optimal policies are equal and they outperform the myopic policy (even though the ratios to the cost of FO genie policy is $1.35$ for $\beta=1$).
Figure \ref{fig:ccu} shows the ratio of the total expected cost of the policies versus $c_u$ for $T=7$. This figure shows that the performances of FRP policy and optimal policies are equal for larger values of $c_u$ (cost ratio equals to 1). And also the cost of FO policy that we used to evaluate the performances of the percentile policies in the previous subsection may provide a looser bound on the actual performances, for larger values of $c_u$.

\begin{figure}[htb]
    \centerline{\includegraphics[width=.9\linewidth]{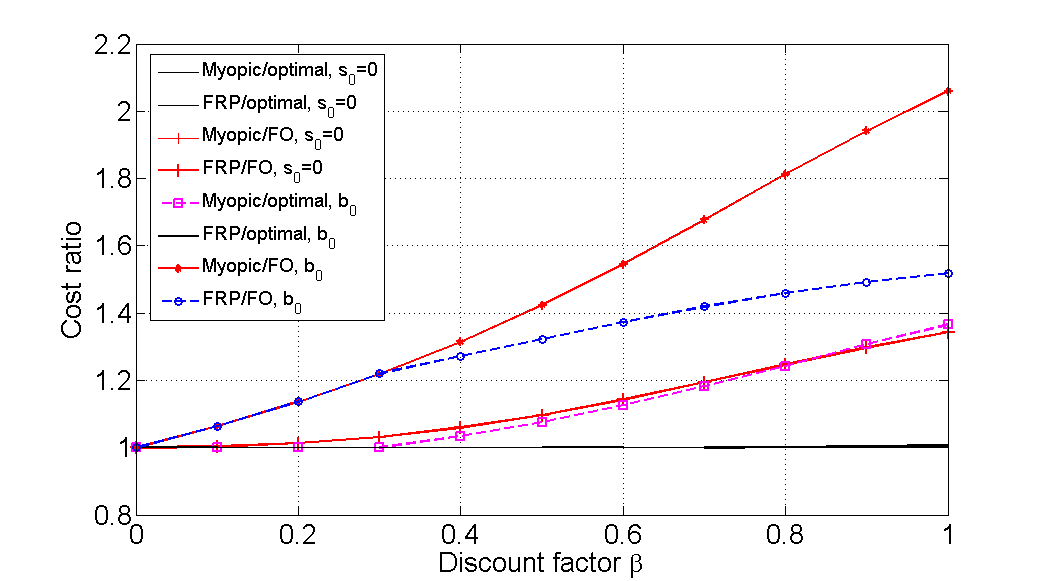}}
    \caption{The total expected cost versus the discount factor $\beta$, for $M=4$, $c_l=1$, $c_u=5$, $\epsilon=0.3$, $T=7$.}
    \label{fig:cbeta}
\end{figure}

\begin{figure}[htb]
    \centerline{\includegraphics[width=.9\linewidth]{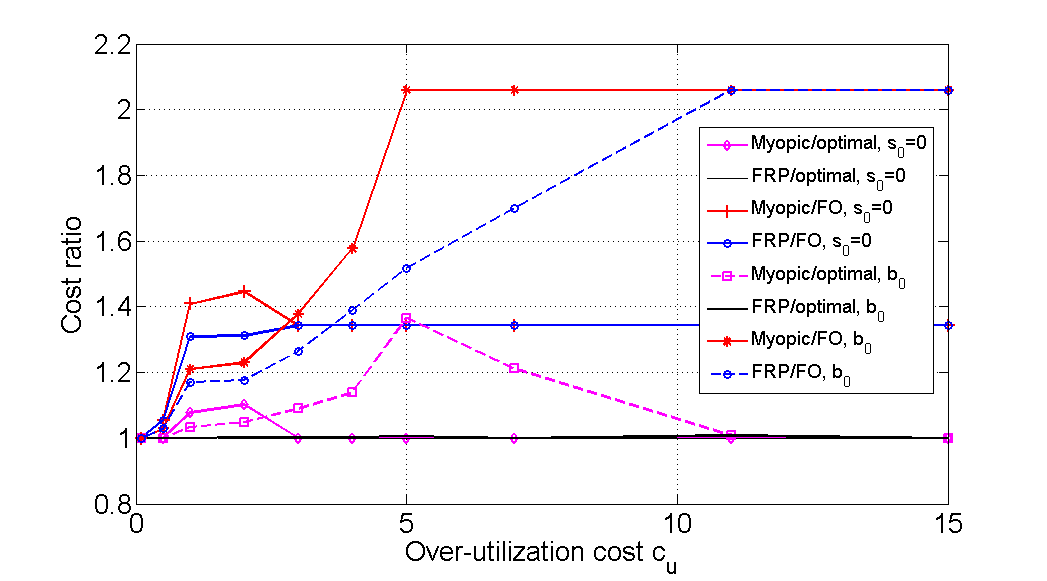}}
    \caption{The total expected cost versus the over-utilization cost, $c_u$, for $M=4$, $c_l=1$, $\beta=1$, $\epsilon=0.3$, $T=7$.}
    \label{fig:ccu}
\end{figure}

Figure \ref{fig:cesp} considers the effect of $\epsilon$ (parameter in the transition matrix $P$) on the ratio of the expected cost of the policies. For small values of $\epsilon$ both FRP and myopic policies under-perform the optimal policy and the performance bound provided by FO policy is looser.

\begin{figure}[t]
    \centerline{\includegraphics[width=.9\linewidth]{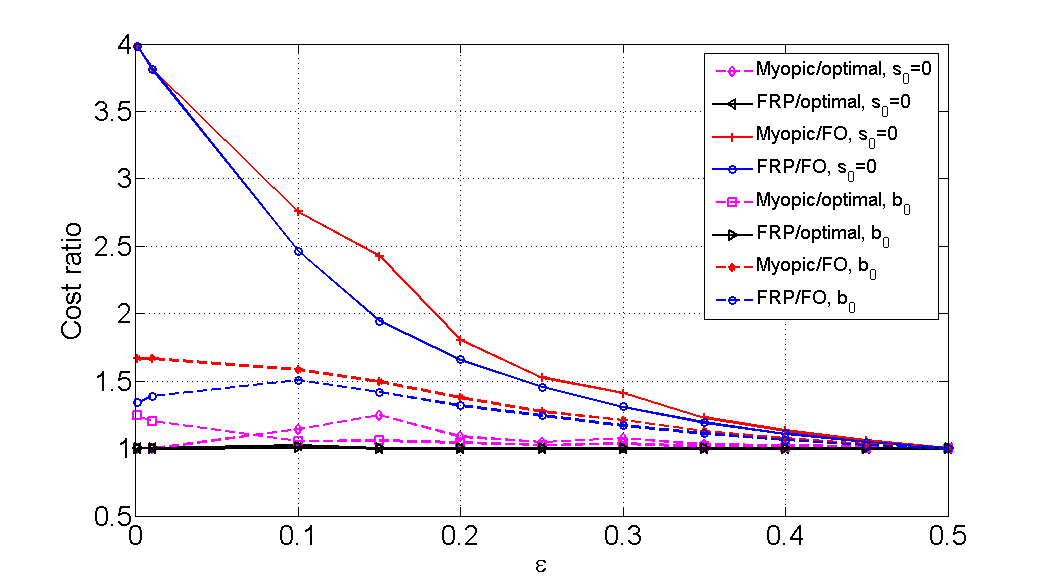}}
    \caption{The total expected cost versus $\epsilon$, for $M=4$, $c_u=1$, $c_l=1$, $\beta=1$, $T=7$.}
    \label{fig:cesp}
\end{figure}

\section{\textbf{Summary and Conclusion} } \label{sec:conclusion}
We have considered the tracking problem of Markovian random processes in which
the goal is to select the best action sequences starting from any full observation in order to minimize the total expected (discounted) cost accumulated over a finite horizon. We have modeled this decision-making problem as a POMDP in a novel formulation based on the last observed state and the observation time.

We have shown that the complexity of computing the optimal policy for general transition matrices of the background Markovian process is exponentially high and thus numerically solving for the optimal policy is limited to very small horizons. 
Instead, we have introduced a new class of policies with a percentile threshold structure, which have been called percentile policies, and evaluated their performances. We have introduced a novel heuristic policy which search among the possible thresholds to find the best thresholds, for each full observed state and the observation time, minimizing the total expected cost of the percentile policy. Numerical results have shown that this policy outperforms the myopic policy. Furthermore, for small horizons we have compared this heuristic policy with the optimal policy and have shown that its performance is very close to the optimal policy.

As future works, we aim to identify the conditions where FRP policy is optimal.
To compute the heuristic policy we have used trial and error method to find the best possible thresholds among a finite threshold set for each full observation time and state. We will analyze the cost functions of percentile policies to find the optimal threshold in the ranges of $[0,1]$.
We will also consider an inventory management problem where the leftover inventories can be carried over to the next time step and derive the percentile policy and its performance guarantee. We also consider a more complicated scenario where the transition matrix is unknown and needs to be learned over time.

\appendices
\section{}\label{app:comp}

\begin{proof}[\textbf{Proof of Proposition \ref{prop:compPT}}]
To compute the action sequences and the expected cost of the percentile policy with given thresholds we need to compute $W^{per}_0(s)$ and $W^{per}_0(b_0)$ of percentile policy for all $s\in\mathcal{M}$ and $t=0,...,T-1$. 
The complexity of functions are as follows:
\begin{itemize}
\item $\underline{\pi}^{per}_{s,t}(t')$ for any $t'$ and given $s,t$, $h_{s,t}$ and transitions: $\Theta(M+1)$,
\item To compute $_{\underline{a}}P(s_1,s_2,\tau)$ for all $s_1,s_2,\tau$ and action sequences with the length of $T-t$: $\Theta((T-t)(M+1)^2)$,
\item $\Gamma(s,t)$ for given parameters above: $\Theta((T-t)\times (M+1))$,
\end{itemize}

In total, given $W^{per}_{t'}(s')$ for all $s'$ and $t'=t+1,...,T-1$, to compute $W^{per}_t(s)$ for a given threshold and $s$, we need $\Theta((T-t)(M+1)^2)$ computations. Thus for all $s\in \mathcal{M}$ we need $(M+1)\times \Theta((T-t)(M+1)^2)$ computations.

Now to compute all action sequences and expected costs, we use backward recursion on time from horizon $T-1$ to $0$.
 Thus, for $t=T-1$ the complexity is simply $\Theta(M+1)$. We use the already computed values at time $t+1,...,T-1$ to compute the minimum cost and the action sequences at time $t$. 
Thus, the total complexity of percentile policy equals to $\Theta(\sum_{t=0}^{T-1} (T-t)\times(M+1)^3)=\Theta(T^2 (M+1)^3)$.
\end{proof}

\section{}\label{app_lproofs}

\begin{proof}[\textbf{Proof of Proposition \ref{C_mybb1}}]
The proof is straightforward using the uni-modality of expected immediate cost given in the following lemma.
\begin{lem} \label{Myopic_cor}
The expected immediate cost for any belief $b$ is a uni-modal function of the action $r$, where a function is called uni-modal iff it has at most one minimum and being decreasing/increasing before/after the minimum.
\end{lem}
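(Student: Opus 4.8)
The plan is to exploit the discrete structure of the action set $\mathcal{M}=\{0,1,\dots,M\}$ and to establish uni-modality by analyzing the first forward difference of $\bar C(b;r)$ in $r$. Specifically, I would define $\Delta(r):=\bar C(b;r+1)-\bar C(b;r)$ for $r=0,\dots,M-1$ and show that this difference is non-decreasing in $r$. This discrete convexity immediately yields uni-modality: a sequence whose successive differences are non-decreasing can change sign at most once (from negative to non-negative), so $\bar C(b;\cdot)$ first decreases and then increases, with a single minimum, which is precisely the definition of uni-modal given in the statement.

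First I would compute $\Delta(r)$ directly from the expression in (\ref{eq:vfnR}). Splitting into the under-utilization and over-utilization sums and cancelling the common terms, the under-utilization part contributes $-c_l\sum_{i=r+1}^M b(i)$ (the $i=r$ boundary term vanishes), while the over-utilization part contributes $c_u\sum_{i=0}^{r} b(i)$, so that
\begin{align}\nonumber
\Delta(r)=c_u\sum_{i=0}^{r} b(i)-c_l\sum_{i=r+1}^M b(i).
\end{align}
Writing $F(r):=\sum_{i=0}^r b(i)$ for the cumulative distribution of the belief and using $\sum_{i=r+1}^M b(i)=1-F(r)$, this simplifies to
\begin{align}\nonumber
\Delta(r)=(c_u+c_l)\,F(r)-c_l.
\end{align}

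The key observation is then that $F(r)$ is non-decreasing in $r$, because the belief entries $b(i)$ are non-negative; hence $\Delta(r)$ is non-decreasing in $r$, establishing the discrete convexity and therefore the uni-modality claimed in the lemma. As a by-product, the sign of $\Delta(r)$ switches from negative to non-negative exactly when $F(r)$ crosses the level $c_l/(c_l+c_u)=h^m$, which pins the minimizer at the smallest $r$ with $\sum_{i=0}^r b(i)\geq h^m$ and directly delivers Proposition \ref{C_mybb1}. I do not anticipate a genuine obstacle here; the only point requiring care is the bookkeeping of the boundary terms (the $i=r$ term in each sum) when forming the difference, together with handling the edge cases $r=0$ and $r=M$, where one of the two sums is empty and $\Delta$ reduces to a single term.
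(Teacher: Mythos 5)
Your proof is correct and follows essentially the same route as the paper: both compute the forward difference $\bar C(b;r+1)-\bar C(b;r)=(c_l+c_u)\sum_{i=0}^{r}b(i)-c_l$ and use the monotonicity of the cumulative belief to conclude that the sign changes at most once, from negative to non-negative. Your explicit framing via discrete convexity and the boundary-term bookkeeping are just slightly more detailed versions of what the paper does.
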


Thus the expected immediate cost is decreasing with respect to $r$ when $\sum_{i=m}^{r} {b(i)} < \frac{c_l}{c_l+c_u}$ and it is increasing otherwise. Therefore, the myopic action which minimizes the expected immediate cost, given in (\ref{eq:myopic}), is equal to the lowest action $r$ satisfying the inequality $\sum_{i=m}^{r} {b(i)} \geq \frac{c_l}{c_l+c_u}$.
\end{proof}

\begin{proof}[\textbf{Proof of Lemma \ref{Myopic_cor}}]
To prove this lemma, let us compute the derivative of the expected immediate cost for the belief vector $b$ with respect to action $r$ given in (\ref{eq:vfnR}),
\begin{align}
\Delta \bar C(b;r)&=\bar C(b;r+1)-\bar C(b;r) =(c_l+c_u) \sum_{i=m}^{r} {b(i)}-c_l.\nonumber
\end{align}
It's easily seen that if the inequality in (\ref{eq:rmc1}) holds, $\Delta \bar C(b;r)\geq 0$. Otherwise it is negative. This concludes that $\bar C(b;r)$ is uni-modal with unique minimum at the myopic action given in (\ref{eq:rmc1}).
\end{proof}

\section{}\label{sec:special}

\textbf{Optimal Policy for Special Cases:}
Even though computing the optimal policy is intractable in general, there are some trivial cases in which the optimal policy is achievable. In the following subsections, we present simple special cases of zero under/over utilization costs, 
and cases where the myopic policy is optimal.

\subsection{\textbf{Zero Under-utilization or Over-utilization Cost}}

For two special cases the optimal policy is trivial. First if the under-utilization cost coefficient is zero, $c_l=0$, the optimal policy selects the lowest actions $0$ all the time or any action sequences in which the transition probabilities to the actions below them are zero, \textit{i.e.}, for any full observation of state $s$ at time $t$, $\underline{\pi}^{opt}_{s,t}(\tau) \leq \min \{i: P^{\tau-t}_{s,i}\neq 0\}$ for $\tau=t+1,...T$, $s\in\mathcal{M}$.
Second, if the over-utilization cost coefficient is zero, $c_u=0$, the optimal policy selects the highest action $1$ all the time or any action sequences in which the transition probabilities to the actions above them are zero, \textit{i.e.}, $\underline{\pi}^{opt}_{s,t}(\tau) \geq \max \{i: P^{\tau-t}_{s,i}\neq 0\}$ for $\tau=t+1,...T$, $s\in\mathcal{M}$. 
Note that the total expected (discounted) costs achieved by the optimal policy for these two trivial cases are always zero. 
The optimal policy for zero under/over-utilization cost is also a percentile policy with a threshold of zero/one.

\subsection{\textbf{Cases where Myopic Policy is Optimal}}
There are some cases where the optimal policy is equivalent to the myopic policy, given by:
\begin{itemize}
\item Zero discount factor ($\beta=0$),
\item Independent and Identically Distributed (i.i.d.) Processes,
\item When there is no ambiguity about the future, \textit{i.e.} the rows of the transition matrix are equal to a fixed unit vector. $P_{i,.}=I_{a}, \forall i\in \mathcal{M}$ for a given $a\in \mathcal{M}$ where $I_a$ is a $M+1$-dimensional unit vector with 1 in the $a$-th position and zero elsewhere.
\end{itemize}

Thus the myopic policy is a percentile policy with the threshold of $h^{m}=\frac{c_l}{c_l+c_u}$.

\subsection{\textbf{Any transition matrix with unit vectors as rows}}
Here we consider the cases where each row of the matrix is a unit vector, \textit{i.e.} there is only one 1 in any row and the rest of the elements are zero, \textit{e.g.} a static process with matrix $P=I$. For these cases, after any full observation
of state $s\in \mathcal{M}$ at time $t=0,...,T-1$, the actual state is simply known for the rest of the times with probability one. Therefore the optimal policy selects the actions equal to the actual states $\underline{\pi}^{opt}_{s,t}(\tau)=\{i: P^{\tau-t}(i)=1\}$ for $\tau=t+1,...,T$ and the expected cost-to-go is zero.

\section{}\label{app_beliefDP}

We could alternatively represent the decision problem based on the decision-maker's belief, \textit{i.e.} his posterior probability on the hidden state $B_t$ at each time step conditioned on past actions and observations (\cite{mansourifard2013bayesian, mansourifard2015, bensoussan2007multiperiod}).
Let the conditioned probability distribution of the state, given all past observations, is denoted by a belief vector $b_t=[b_t(0),...,b_t(M)]$, with elements of $b_t(k)=Pr(B_t=k|~past~observations), k \in \mathcal{M}$. In other words, $b_t$ represents the probability distribution of $B_t$ over all possible states in $\mathcal{M}$.
In this representation, the goal is to make a decision at each time step based on the history of observations; but due to the lack of full information, the decision-maker may only make the decision based on the belief vector. 
It can be shown that the belief vector is a sufficient statistic of the complete observation history (see \textit{e.g.}, \cite{smallwood1973optimal}). Note that to find the optimal action at each time, we need to know the history of actions and observations. The belief vector is a function of these parameters and is updated every time step based on the selected action and the observation, since all the actions and observations could effect the probability distribution of the states. Thus, all the history is compressed in the belief vector.

Here, we present the belief-based DP which results in the same optimal policy as the sequence-based DP presented in the main text,
as the following recursive equations: 
\begin{subequations}
\begin{align}
&V_t(b_t):=\min_{r_t} V_t(b_t;r_t), \label{eq:vfnmin}\\
&V_T(b_T;r_T)=\bar C_{T}(b_T;r_T),\label{eq:vfnT}\\
&V_t(b_t;r_t):=\bar C(b_t;r_t)+\beta \mathbb{E}\{V_{t+1}(b_{t+1})|r_t,b_t\},\ \ t<T  \label{eq:vfn}
\end{align}
\end{subequations}
where $b_{t+1}$ is the updated belief vector. It can be computed given the action $r_t$ and observations.
The value function $V_t(b_t)$  is the minimum expected cost-to-go when the current belief vector is $b_t$. 
Note that $V_t(b_t;r)$ is the expected cost-to-go after time $t$ under belief $b_t$ and action $r$ at time $t$ and following the optimal policy for time $t+1$ onward, with updated belief vector according to the action $r$.

The belief updating maps current belief vector, selected action, and the observation to the belief vector for the next time step:
\begin{align}\label{eq:pt1}
b_{t+1} = \begin{cases}
  T_{r_t}[b_{t}] P  &\text{if}\ B_t\geq r_t, \\
 I_{i}P  &\text{if}\  B_t=i<r_t,\\
\end{cases}\
\end{align}
where $I_{i}$ is the $M+1$-dimensional unit vector with 1 in the $i$-th position and 0 otherwise. Note that $I_{i}P$ is equivalent to the $i$-th row of matrix P, \textit{i.e.} $P_{i,.}$.
$T_{r}$ is a non-linear operation on a belief vector $b$, as follows:
\begin{align}\label{eq:Trp}
T_{r}[b](i) = \begin{cases}
  0  &\text{if}\ i < r, \\
 \frac{b(i)}{\sum_{j=r}^M b(j)}  &\text{if}\  i\geq r.\\
\end{cases}\
\end{align}
The update of the belief vector is derived in two steps. First step: when we get full observation about the actual state $i$ with taking action $r_t$, \textit{i.e.} when $i<r_t$, the probability distribution of the state could be updated to $I_i$, since we are sure that the probability of actual state being different than $i$ is zero.
And when we get partial observation that the actual state is higher than $r_t$, we can update the probability distribution as follows: force the probability of the actual state being less than $r_t$ to be zero and normalize the rest of the probabilities to sum up to one. Second step: the above updating corresponds to the probability distribution at the same time step of taking action and observation. But we need a new probability distribution (belief vector) for the next time step. Thus we multiply the updated distribution with transition matrix $P$.

The future expected (discounted) cost can be computed as follows:
\begin{align}
\mathbb{E}\{V_{t+1}(b_{t+1})|r_t,b_t\}&=\sum_{i=r_t}^M {b_t(i)} V_{t+1}(T_{r_t}[b_{t}]P)\nonumber\\
&+\sum_{i=0}^{r_t-1} {b_t(i)V_{t+1}(P_{i,.})}.\label{eq:vfnR3}
\end{align}
Note that for all $t=1,...,T$, $V_t(b_t)=\min_{\pi} J^{\pi}_{T-t}(b_t)$ with probability 1. In particular, $V_1(b_1)=J_T^{\pi}(b_1)$.
A policy $\pi^{opt}$ is optimal if for $t=1,...,T$; $r_t^{opt}(b_t)$ achieves the minimum in (\ref{eq:vfnmin}), denoted by: 
\begin{align}
r_t^{opt}(b_t):=\arg\min_{r \in \mathcal{M}} V_t(b_t;r).\nonumber
\end{align}

This DP is equivalent to the DP given in Section \ref{sec:seqDP} as follows:
\begin{align}
&V_t(P_{i,.})=W_{t-1}(i), \nonumber\\
&V_0(b_0)=W(b_0),\nonumber\\
&r_{\tau}^{opt}(b_{\tau})=\underline{\pi}^{opt}_{s^{FO},t^{FO}}(\tau),\nonumber
\end{align}
where $b_{\tau}$ is the updated belief using (\ref{eq:pt1}) based on the optimal actions at times $t+1,...,\tau-1$ (propagated from $b_{t}=I_s$) and $t^{FO}$ is the last time of full observation to the state $s^{FO}$ before $\tau$.

\section{}\label{app:PBFO}

To prove Proposition \ref{prop:FO}, we use the belief-based DP presented in Appendix \ref{app_beliefDP} and it is enough to prove the following lemma:

\begin{lem}\label{prop:FO2}
The cost-to-go of the optimal policy is lower bounded by the cost-to-go of the full observation (FO) case under the same belief vector, \textit{i.e.},
\begin{align}\label{eq:FO2}
V_t(b_t) \geq V^{FO}_t(b_t).
\end{align}
\end{lem}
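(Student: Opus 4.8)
The plan is to prove the lemma by backward induction on the time index $t$, working entirely inside the equivalent belief-based dynamic program of Appendix \ref{app_beliefDP}. The structural fact I would establish first is that, for the genie, the chosen action has no effect on what will subsequently be observed: the genie learns $B_t$ with one step of delay no matter which $r_t$ it plays. Hence the action influences only the immediate cost, and the genie's optimal action at any belief $b$ is the myopic one of Proposition \ref{C_mybb1}. This lets me write the genie value function at an \emph{arbitrary} belief $b$ as
\begin{align}
V^{FO}_t(b)=\min_{r}\bar C(b;r)+\beta\sum_{i=0}^{M} b(i)\,V^{FO}_{t+1}(P_{i,.}),\nonumber
\end{align}
which reduces to (\ref{eq:WFO}) when $b=P_{s,.}$ through the identification $V^{FO}_{t+1}(P_{i,.})=W^{FO}_{t+1}(i)$. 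From this closed form I would immediately record the second key ingredient: $V^{FO}_t(\cdot)$ is concave in the belief, since $\min_r\bar C(b;r)$ is a pointwise minimum of functions affine in $b$ (recall $\bar C(b;r)=\sum_i b(i)C(i;r)$) and the look-ahead term is itself affine in $b$.

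For the base case at the horizon, both $V_t(b)$ and $V^{FO}_t(b)$ collapse to $\bar C(b;\pi^{myopic}(b))$, so equality holds. For the inductive step I would assume $V_{t+1}(b)\ge V^{FO}_{t+1}(b)$ for every belief $b$ and substitute this into the POMDP recursion (\ref{eq:vfn}) and (\ref{eq:vfnR3}). Fixing an arbitrary action $r$, I would bound the two pieces separately. The immediate cost satisfies $\bar C(b_t;r)\ge \bar C(b_t;\pi^{myopic}(b_t))$ by definition of the myopic action. For the look-ahead term I would invoke concavity: the partial-observation update $T_{r}[b_t]P$ is exactly the convex combination $\sum_{i\ge r}\frac{b_t(i)}{\sum_{j\ge r}b_t(j)}P_{i,.}$ of the pure-state rows, so Jensen's inequality for the concave $V^{FO}_{t+1}$ gives
\begin{align}
\Big(\sum_{i\ge r}b_t(i)\Big)V^{FO}_{t+1}(T_{r}[b_t]P)\ \ge\ \sum_{i\ge r}b_t(i)\,V^{FO}_{t+1}(P_{i,.}).\nonumber
\end{align}
Adding back the full-observation contributions $\sum_{i<r}b_t(i)V^{FO}_{t+1}(P_{i,.})$ shows that the genie's expected future cost $\sum_i b_t(i)V^{FO}_{t+1}(P_{i,.})$ lower-bounds the POMDP's expected future cost for that same $r$. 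Combining the two bounds shows that, for \emph{every} $r$, the bracketed quantity in (\ref{eq:vfn}) is at least $V^{FO}_t(b_t)$; taking the minimum over $r$ preserves the inequality and yields $V_t(b_t)\ge V^{FO}_t(b_t)$.

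Finally I would note that the lemma delivers Proposition \ref{prop:FO} by evaluating at $t=0$ with the degenerate initial belief $b_0=I_{s_0}$, so that $V_0$ equals $W^{\pi^{opt}}_0(s_0)$ and $V^{FO}_0$ equals $W^{FO}_0(s_0)$. The step I expect to be the crux is the look-ahead comparison: the entire argument hinges on recognizing that partial observation merely ``smears'' the next belief into a mixture of the exact-observation rows, and that concavity of the genie value function turns this loss of information into a genuine increase in cost. Establishing and correctly invoking that concavity, rather than the bookkeeping of the two-branch belief update, is where the real content lies; once the myopic characterization of the genie and the mixture representation of $T_r[b_t]P$ are in hand, the remainder is routine.
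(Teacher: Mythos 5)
Your proof is correct, but it takes a genuinely different route from the paper's. The paper establishes concavity of the \emph{optimal POMDP value function} $V_{t+1}(\cdot)$ (its Lemma \ref{L_convvf}, which requires a separate induction through the nonlinear belief update $T_r[\cdot]$), applies Jensen to $V_{t+1}(T_{r^{opt}}[b_t]P)$ to split it over the pure-state rows $P_{i,.}$, and then invokes the induction hypothesis only at those pure-state beliefs. You instead invoke the induction hypothesis at \emph{both} the pure-state beliefs and the mixed belief $T_r[b_t]P$, which lets you transfer the look-ahead comparison entirely onto the genie side, and then you only need concavity of $V^{FO}_{t+1}(\cdot)$ — which, as you observe, is immediate from its closed form $\min_r\bar C(b;r)+\beta\sum_i b(i)V^{FO}_{t+1}(P_{i,.})$ as a pointwise minimum of affine functions plus an affine term, with no inductive concavity argument needed. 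Your version also bounds $V_t(b_t;r)$ uniformly over all $r$ before minimizing, where the paper plugs in $r^{opt}(b_t)$ directly; that difference is cosmetic. The net effect is that your argument dispenses with the paper's Lemma \ref{L_convvf} entirely (at the cost of needing the inductive hypothesis to hold at arbitrary beliefs, which is how the lemma is stated anyway), which is a real economy; the paper's approach has the side benefit that concavity of the optimal value function is a structural fact of independent interest for POMDPs. Both proofs correctly identify the same crux — partial observation smears the next belief into a convex combination of the exact-observation beliefs, and concavity converts that loss of information into an increase in cost.
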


We need the following lemma about the concavity of cost-to-go functions to prove Lemma \ref{prop:FO2}.

\begin{lem}\label{L_convvf}
The expected discounted cost-to-go accrued under action $r$, $V_t(b;r)$, and the value function, $V_t(b)$, are concave with respect to the belief vector $b$,
\textit{i.e.}
\begin{align}\label{eq:convf}
&V_t(b;r)\geq \lambda V_t(b_1;r)+(1-\lambda)V_t(b_2;r),\ \ \forall r\in \mathcal{M}, \nonumber\\
&V_t(b)\geq \lambda V_t(b_1)+(1-\lambda)V_t(b_2),\ \ \forall 0\leq \lambda \leq 1.
\end{align}
\end{lem}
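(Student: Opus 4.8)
The plan is to establish both inequalities in (\ref{eq:convf}) simultaneously by backward induction on the time index $t$, showing at each step that $V_t(b;r)$ decomposes into terms that are manifestly concave in $b$, after which concavity of $V_t(b)=\min_r V_t(b;r)$ follows because a pointwise minimum of concave functions is concave. The essential structural fact I will exploit is that this POMDP's asymmetric observation model splits the propagated value into a part that is linear in $b$ (the full-observation branch, whose successor beliefs $P_{i,.}$ do not depend on $b$) and a part that is a perspective of the next-stage value function (the partial-observation branch).

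For the base case $t=T$, equation (\ref{eq:vfnT}) gives $V_T(b_T;r_T)=\bar C(b_T;r_T)$, which by (\ref{eq:vfnR}) is linear, hence concave, in $b_T$; therefore $V_T(b_T)=\min_{r_T}V_T(b_T;r_T)$ is concave. For the inductive step I assume $V_{t+1}$ is concave and analyze $V_t(b;r)$ through (\ref{eq:vfn}) and (\ref{eq:vfnR3}). The immediate cost $\bar C(b;r)$ is linear in $b$, and in the expectation (\ref{eq:vfnR3}) the full-observation part $\sum_{i=0}^{r-1}b(i)V_{t+1}(P_{i,.})$ is also linear in $b$, because each $V_{t+1}(P_{i,.})$ is a constant coefficient. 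The only term requiring work is the partial-observation part, which I rewrite as $q_r(b)\,V_{t+1}(T_r[b]P)$ with $q_r(b):=\sum_{i=r}^M b(i)$.

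The key observation is that this term is exactly a perspective of $V_{t+1}$ composed with linear maps. I introduce the unnormalized tail $\tilde b$ defined by $\tilde b(i)=b(i)$ for $i\geq r$ and $\tilde b(i)=0$ for $i<r$; then both $\tilde b$ and $q_r(b)=\sum_i \tilde b(i)$ are linear in $b$, and by (\ref{eq:Trp}) one has $T_r[b]P=(\tilde b P)/q_r(b)$, so the term equals $q_r(b)\,V_{t+1}\!\big((\tilde b P)/q_r(b)\big)$. Since $V_{t+1}$ is concave by the inductive hypothesis, the perspective operation $(y,q)\mapsto q\,V_{t+1}(y/q)$ preserves concavity, and $\tilde b P$ together with $q_r(b)$ are linear in $b$, this term is concave in $b$. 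Adding the two linear terms preserves concavity, so $V_t(b;r)$ is concave in $b$ for every fixed $r$; taking the minimum over $r$ then yields concavity of $V_t(b)$, which completes the induction.

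The step I expect to be delicate is the perspective argument at the boundary $q_r(b)=0$, i.e.\ when $b$ places no mass on states at or above $r$ so that $T_r[b]$ is undefined. There the partial-observation term must be read as its continuous extension, which is $0$ because $V_{t+1}$ is bounded over the compact simplex; I will note that this is precisely the standard closure of the perspective function and leaves concavity intact. A secondary point I will state carefully is that a pointwise minimum of concave functions is concave: for $b=\lambda b_1+(1-\lambda)b_2$, evaluating every branch at the common minimizing index of the combination can only exceed the value obtained by minimizing each endpoint separately, which gives $V_t(b)\geq \lambda V_t(b_1)+(1-\lambda)V_t(b_2)$ and hence both claims in (\ref{eq:convf}).
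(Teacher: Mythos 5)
Your proof is correct and follows essentially the same route as the paper's: backward induction with the immediate-cost and full-observation terms handled by linearity, and the partial-observation term handled through the normalized belief update. Your appeal to the concavity-preserving perspective operation $(y,q)\mapsto q\,V_{t+1}(y/q)$ is exactly what the paper verifies by hand via the reweighted coefficient $\lambda'=\lambda \frac{\sum_{i\geq r} b_1(i)}{\sum_{i\geq r} b(i)}$ and the identity $\lambda' T_r[b_1]P+(1-\lambda')T_r[b_2]P=T_r[b]P$, so the two arguments coincide in substance.
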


\begin{proof}[\textbf{Proof of Lemma \ref{L_convvf}}]
We use induction to prove the concavity of $V_t(b;r)$ with respect to the belief vector, $b$, for finite horizon.
Let's assume $b$ is a linear combination of two belief vectors $b_1$ and $b_2$, such that:
\begin{align}\nonumber
b=\lambda b_1+(1-\lambda) b_2, 0\leq \lambda \leq 1.
\end{align}

At horizon $T$, the immediate cost, as given in (\ref{eq:vfnR}), is affine linear with respect to the belief vector. In other words,
\begin{align}\label{eq:affT}
\bar C(b;r)&=\lambda \bar C(b_1;r)+ (1-\lambda) \bar C(b_2;r).\end{align}
which confirms the concavity of the expected cost-to-go at horizon $T$.
Now assuming $V_{t+1}(.)$ is concave, we will consider $V_t(.)$. Using (\ref{eq:vfn}) and (\ref{eq:vfnR3}) we have:

\begin{align}
V_t(&b;r)-\lambda V_t(b_1;r) -(1-\lambda) V_t(b_2;r) \nonumber\\
&=[C(b;r)-\lambda \bar C(b_1;r)\nonumber\\
&- (1-\lambda) \bar C(b_2;r)] -(1-\lambda) V_{t+1}(T_r [b_2] P)\sum_{i=r}^{M} {b_2(i)})\nonumber\\ 
&+\beta(V_{t+1}(T_r [b] P)\sum_{i=r}^{M} {b(i)}-\lambda V_{t+1}(T_r [b_1] P)\sum_{i=r}^{M} {b_1(i)},\nonumber\\
\end{align}
and from (\ref{eq:affT}) and  $\lambda'=\lambda \frac{\sum_{i=r}^{M} {b_1(i)}}{\sum_{i=r}^{M} {b(i)}}$:
\begin{align}
V_t(b;r)&-\lambda V_t(b_1;r) -(1-\lambda) V_t(b_2;r) \nonumber\\
&=\beta \sum_{i=r}^{M} {b(i)} [V_{t+1}(T_r [b] P)-\lambda' V_{t+1}(T_r [b_1] P)\nonumber\\
&-(1-\lambda') V_{t+1}(T_r [b_2] P)], \nonumber
\end{align}

Let $j\geq r$:
\begin{align}
&\lambda' T_r [b_1](j)+(1-\lambda') T_r [b_2](j)\nonumber\\
&=\frac{\lambda \sum_{i=r}^{M} {b_1(i)} T_r [b_1](j) + (1-\lambda)\sum_{i=r}^{M} {b_2(i)} T_r [b_2](j)}{\sum_{i=r}^{M} {b(i)}} \nonumber\\
&=\frac{1}{\sum_{i=r}^{M} {b(i)}}[\lambda \sum_{i=r}^{M} {b_1(i)} \frac{b_1(j)}{\sum_{i=r}^{M} {b_1(i)}} \nonumber\\
&+ (1-\lambda)\sum_{i=r}^{M} {b_2(i)} \frac{b_2(j)}{\sum_{i=r}^{M} {b_2(i)}}]
  \nonumber\\
&=\frac{\lambda b_1(j) + (1-\lambda)b_2(j)}{\sum_{i=r}^{M} {b(i)}}=\frac{b(j)}{\sum_{i=r}^{M} {b(i)}}=T_r [b](j),\nonumber
\end{align}
and for $j<r$, $T_r [b_1](j)+(1-\lambda') T_r [b_2](j)=0$.
Multiplying by $P$, we have $\lambda' T_r [b_1]P+(1-\lambda') T_r[b_2]P=T_r[b]P$. The induction step follows the concavity of $V_{t+1}(.)$.

To prove the concavity of value function, $V_t(b)$, with respect to $b$ we use the definition of (\ref{eq:vfnmin}) to get:
\begin{subequations}
\begin{align}
&V_t(b)=\min_{r}{V_t(b;r)}=V_t(b;r^*) \label{eq:pfl2}\\
&\geq \lambda V_t(b_1;r^*)+(1-\lambda)V_t(b_2;r^*)\label{eq:pfl21}\\
&\geq \lambda \min_{r_1}{V_t(b_1;r_1)}+(1-\lambda)\min_{r_2}{V_t(b_2;r_2)}\label{eq:pfl22}\\
&=\lambda V_t(b_1)+(1-\lambda)V_t(b_2),\label{eq:pfl23}
\end{align}
\end{subequations}
where $r^*=\arg \min_{r}{V_t(b;r)} $ and (\ref{eq:pfl21}) is the result of the lemma for $V_t(b;r^*)$ and applying the definition of (\ref{eq:vfnmin}) one more time in (\ref{eq:pfl23}) completes the proof.
\end{proof}

\begin{proof}[\textbf{Proof of Lemma \ref{prop:FO2}}]
First, the cost-to-go function of FO policy can be computed as:
\begin{align}
V^{FO}_t(b_t,r)=\bar C(b_t; r)+\beta \sum_{i=0}^{M} b_t(i) V^{FO}_{t+1}(P_{y,.}),\nonumber
\end{align}
Thus the optimal policy for FO policy is equivalent to the myopic policy since the second term is independent of the action $r$ and,
\begin{align}
\arg \min_r V^{FO}_t(b_t,r)=\arg \min_r \bar C(b_t; r).\nonumber
\end{align}
Therefore,
\begin{align}\nonumber
V^{FO}_t(b_t)=\bar C(b_t; \pi^{myopic}(b_t))+\beta \sum_{i=0}^{M} b_t(i) V^{FO}_{t+1}(P_{i,.}).
\end{align}
Now to prove the proposition we use induction. Fist, at $t=T$ we have:
\begin{align}
V_T(b_T)&- V^{FO}_T(b_T)\nonumber\\
&=\min_r \bar C(b_T; r)-\bar C(b_T; \pi^{myopic}(b_T))= 0,\nonumber
\end{align}
since the optimal action and the myopic action are equivalent at the horizon.
Now assuming (\ref{eq:FO}) is true at time steps $t+1$ onwards, we should prove it for time $t$.
\begin{align}
V_t(b_t)-& V^{FO}_t(b_t)=\bar C(b_t; r^{opt}(b_t))\nonumber\\
&+\beta \sum_{i=0}^{r^{opt}(b_t)-1}b_t(i) V_{t+1} (P_{i,.})\nonumber\\
&+\beta \sum_{i=r^{opt}(b_t)}^M b_t(i) V_{t+1} (T_{r^{opt}(b_t)}[b_{t}]P)\nonumber\\
&-\bar C(b_t; \pi^{myopic}(b_t))-\beta \sum_{i=0}^{M} b_t(i) V^{FO}_{t+1}(P_{i,.}), \nonumber
\end{align}
where $T_{r^{opt}(b_t)}[b_{t}]P=\frac{\sum_{j=r^{opt}(b_t)}^M b_t(j)P_{j,.}}{\sum_{i=r^{opt}(b_t)}^M b_t(i)}$.
We use the concavity of the value function to get the following inequality:
\begin{align}\label{eq:conc}
V_{t+1} (T_{r^{opt}(b_t)}[b_{t}]P) \geq
\frac{\sum_{i=r^{opt}(b_t)}^M b_t(i)V_{t+1} (P_{i,.})}{\sum_{j=r^{opt}(b_t)}^M b_t(j)}.
\end{align}
Therefore, by applying (\ref{eq:conc}) and merging two summations:
\begin{subequations}
\begin{align}
V_t(b_t)-& V^{FO}_t(b_t)\nonumber\\
&=[\bar C(b_T; r^{opt}(b_t))-\bar C(b_T; \pi^{myopic}(b_T))]\label{eq:FOind2-a}\\
&+\beta \sum_{i=0}^{M}b_t(i) [V_{t+1} (P_{i,.})-V^{FO}_{t+1}(P_{i,.})]\geq 0. \label{eq:FOind2-b}
\end{align}
\end{subequations}
The term in (\ref{eq:FOind2-a}) is greater than or equal to zero based on the definition of the myopic policy. The term in (\ref{eq:FOind2-b}) is also greater than or equal to zero using the induction assumption at $t+1$. Thus the whole expression is greater than or equal to zero and the proof is complete.
\end{proof}

\section*{Acknowledgement}
This work was supported in part by the U.S. National
Science foundation under ECCS-EARS awards numbered
1247995 and 1248017, by the Okawa foundation through
an award to support research on ``Network Protocols that
Learn'', and a partial support from L3-communications as well as UCSD's center for Wireless Communications and Networked Systems. Parisa Mansourifard holds AAUW American Dissertation Completion Fellowship for 2015-2016.

\bibliographystyle{IEEEtran}
\bibliography{myrefs}

\end{document}